\newif\ifINTERNAL \INTERNALtrue \INTERNALfalse
\newif\ifjbf \jbftrue \jbffalse
\newif\ifANONYMOUS \ANONYMOUStrue \ANONYMOUSfalse
\else \documentclass[11pt,a4paper]{article} \fi
\definecolor{BrickRed}{rgb}{.625,.25,.25}
\definecolor{markergreen}{rgb}{0.6, 1.0, 0}
\providecommand{\marker}[1]{\fcolorbox{markergreen}{markergreen}{{#1}}}
\newcommand{\E}{{\mathbb{E}}}
\newcommand{\p}{{\bf P}}
\newcommand{\e}{{\bf e}}
\newcommand{\dd}{{\rm d}}
\providecommand{\Ncdf}{{\rm N}}
\newcommand{\1}{\ensuremath{\mathbf{1}}}
\newcommand{\argmax}{\operatornamewithlimits{argmax}}
\providecommand{\abstractText}{
  \noindent In 2012, JPMorgan accumulated a USD~6.2 billion loss on a
  credit derivatives portfolio, the so-called ``London Whale'', partly
  as a consequence of de-correlations of non-perfectly correlated
  positions that were supposed to hedge each other. Motivated by this
  case, we devise a factor model for correlations that allows for
  scenario-based stress testing of correlations. We derive a number of
  analytical results related to a portfolio of homogeneous
  assets. Using the concept of Mahalanobis distance, we show how to
  identify adverse scenarios of correlation risk.  In addition, we
  demonstrate how correlation and volatility stress tests can be
  combined. As an example, we apply the factor-model approach to the
  ``London Whale'' portfolio and determine the value-at-risk impact
  from correlation changes. Since our findings are particularly
  relevant for large portfolios, where even small correlation changes
  can have a large impact, a further application would be to stress
  test portfolios of central counterparties, which are of systemically
  relevant size.}
\journal{Journal of Banking \& Finance}
\begin{document}
\begin{frontmatter}
  \title{\Large\bf A factor-model approach for correlation scenarios
    and correlation stress testing%
    \ifANONYMOUS\else \tnoteref{thanks} \fi }
  
  \ifANONYMOUS
  \author{}
  \else \tnotetext[thanks]{ Part of this
    research was undertaken while Fabian Woebbeking was visiting
    Columbia University, NYC.  This work is financially supported by
    the Global Association of Risk Professionals (GARP), by the
    Europlace Institute of Finance (EIF) -- Labex Louis Bachelier and
    by The Frankfurt Institute of Risk Management and Regulation
    (FIRM). The authors would like to thank Stephen Taylor (NJIT) and
    two anonymous referees for helpful comments and
    discussion. Declarations of interest: none.  }
  
  \author[fsfm]{N.~Packham}
  \ead{packham@hwr-berlin.de}
  \author[hof]{C.F.~Woebbeking\corref{cor1}}
  \ead{woebbeking@finance.uni-frankfurt.de}
  \address[fsfm]{Berlin School of Economics and Law, Badensche Str.\
    52, 10825 Berlin, Germany}
  \address[hof]{Goethe University Frankfurt, Grueneburgplatz 1,
    60323\\Frankfurt, Germany} 
  \cortext[cor1]{Corresponding author: Tel.: +49 (69) 798 33731; Fax:
    +49 (69) 798 33901}
  \fi

\begin{abstract}
  \abstractText
\end{abstract}

\begin{keyword}
  Correlation stress testing \sep scenario selection \sep market risk
  \sep ``London Whale'' \JEL G11 \sep G32
\end{keyword}
\end{frontmatter}
\else 

\title{ \Large\bf A factor-model approach for correlation scenarios
  and correlation stress testing\thanks{ Part of this research was
    undertaken while Fabian Woebbeking was visiting Columbia
    University, NYC.  This work is financially supported by the Global
    Association of Risk Professionals (GARP), by the Europlace
    Institute of Finance (EIF) -- Labex Louis Bachelier and by The
    Frankfurt Institute of Risk Management and Regulation (FIRM). The
    authors would like to thank Stephen Taylor (NJIT) for helpful
    comments and discussion as well as two anonymous referees for
    their valuable suggestions that helped to improve the paper in 
      several respects.
  }
}

\author{ Natalie Packham\thanks{Berlin School of Economics and Law,
    \href{mailto:packham@hwr-berlin.de}{packham@hwr-berlin.de}} \and
  Fabian Woebbeking\thanks{Goethe University Frankfurt,
    \href{mailto:woebbeking@finance.uni-frankfurt.de}
    {woebbeking@finance.uni-frankfurt.de}} }

\begin{document}
\ifINTERNAL
\clearpage
\begin{center}
\marker{INTERNAL PAGE}
\end{center}
\tableofcontents
\clearpage
\fi

\maketitle

\begin{abstract}

\abstractText \bigskip

\noindent{\bf JEL Classification:} G11, G32\smallskip

\noindent{\bf Keywords:} Correlation stress
testing, scenario selection, market risk, ``London Whale''
\end{abstract}
\fi 

\section{Introduction}

Diversification -- typically captured by correlation -- lies at the
heart of many financial applications: a diversified portfolio is less
risky than a concentrated portfolio; hedging strategies may involve
only imperfectly correlated assets instead of perfect substitutes. It
is well-known that correlations are not constant over time and may be
strongly affected by specific events
\citep{Karolyi1996,longin2001extreme,Ang2002a, wied2012testing,
  pu2012correlation, adams2017correlations}. Changes in correlation
may lead to potentially unexpected or unquantified losses, see e.g.\
LTCM \citep{Jorion2000}, Amaranth Advisors \citep{Chincarini2007}.

This paper develops a technique for generating correlation matrices
from specific risk factor scenarios. The method allows to challenge
diversification benefits in a realistic way by quantifying potential
losses from correlation changes or a correlation break-down due to
various scenarios. Consequently, worst-case scenarios and their impact
can be identified. Quantifying these risks is particularly important
if a portfolio or a hedging strategy may be adversely affected by a
correlation breakdown amongst the portfolio constituents. For example,
hedging strategies involving non-perfect substitutes, such as a stock
portfolio and index futures for hedging, are sensitive to correlation
changes and thus vulnerable to adverse correlation scenarios.

The technique borrows elements from parameterising correlation
matrices in interest rate modelling, e.g.\
\cite{Rebonato2002,Brigo2002,Schoenmakers2003}. These
parameterisations have in common that the degree of correlation
depends on the difference in maturity of the underlying interest rates
(e.g.\ swap rates). In its simplest form, correlations are determined
by $\e^{-\beta|i-j|}$, where $\beta>0$ is a constant parameter, and
$i,j$ are maturities. This captures the stylised fact that
correlations decay with increasing maturity difference.

In this paper, this approach is generalised by defining factors that
characterise differences in the assets under consideration, and by
parameterising correlations via ``distances'' capturing these
differences. The parameters themselves can be calibrated for example
from historical data. Scenarios are generated by varying the
parameters, where an increase in a parameter captures a de-correlation
related to a factor.

The method is capable of identifying the factor structure of worst
case scenarios. More specifically, given the mapping of correlation
risk factor to risk measure, one can find the global maximum of the
risk measure and infer the corresponding risk factor scenario.  As
each parameter represents an economically relevant correlation risk
factor, it is therefore possible to identify critical portfolio
structures that might require particular attention from a risk
management perspective.

Aside from the impact of a given scenario, one is also interested in
the plausibility of the chosen scenarios. This can be implemented by
assigning a joint probability distribution to the correlation
parameters in order to define a constraint for correlation
scenarios. In this paper, the constraint is specified via the
so-called Mahalanobis distance, which measures the distance of a
normally distributed random vector from the center of the
distribution.

As correlation stress often occurs jointly with volatility shocks, we
also demonstrate how to combine the two stress scenarios. To model
volatility separately we assume that asset returns follow a
multivariate Student $t$-distribution (as opposed to a normal
distribution). As a $t$-distribution can be conveniently decomposed
into a correlated normal distribution component and an
inverse-gamma-distributed scaling factor, volatility stress is
introduced by setting the scaling factor to a given quantile.

To demonstrate the technique, correlation stress tests are applied to
the portfolio of the so-called ``London Whale'', a term used in the
finance industry to denote a USD 6.2~billion loss in 2012 of a credit
derivative portfolio run by JPMorgan. In late 2011, in an effort to
reduce the risk of the position without monetising losses, the
notional amount of the portfolio was increased, while relying on the
ability of similar credit index positions to act as hedges for each
other \citep{JPMorgan2013, USS2013Report}. Our analysis shows that
correlation scenarios and stress tests reveal the high riskiness of
this portfolio and thus might have led to a more appropriate risk
assessment of the portfolio.\footnote{%
  Other risks specific to the ``London Whale'', especially concerning
  the size of the position relative to market size, are treated in
  \citet{Cont2015}.}%

A further application where correlation scenario and stress testing
can reveal inherent risks is the practice of so-called ``portfolio
margining'' in initial margin calculations of clearing houses. Here,
netting of offsetting positions reduces the margin
requirement. However, when positions are not perfect hedges, but only
highly correlated, an adverse correlation scenario could lead to
substantial margin calls, thereby increasing counterparty risk at a
systematic level.

The literature on establishing correlation stress tests is scarce,
even though it is well established that correlations are not constant
over time and may be strongly affected by specific events
\citep{longin2001extreme, wied2012testing,
  pu2012correlation}. \citet{adams2017correlations} observe that
correlations vary over time and, in addition, experience level shifts
and structural breaks that occur in response to economic or financial
shocks. \citet{krishnan2009correlation} and
\citet{mueller2017international} provide empirical evidence that
investors demand a correlation risk premium, which is related to the
uncertainty about future correlation
changes. \citet{buraschi2010correlation} develop a framework for
inter-temporal portfolio choice that includes hedging components
against correlation risk. 

The prominent role of correlation in financial portfolios led to
regulatory agencies calling for risk model stress tests that account
for ''significant shifts in correlations'' \citep[p.~207
ff.]{bcbs128}. However, there is little literature on parametric
correlation modelling, in particular related to risk-factor driven
stress testing: aside from challenges of mathematical consistency in
correlation modelling (correlation matrices must be positive
semi-definite, see e.g.\ \citet{qi2010correlation,ng2014black} for
solutions to this problem), the specification of stressful yet
plausible scenarios for correlations is far from straightforward.

The selection of plausible scenarios poses a challenge in the
development of stress testing methods in general. The use of
historical or hypothetical scenarios is problematic, as the
probability and thus the plausibility of a
scenario is typically 
unknown, while at the same time relevant scenarios might be
neglected. In an extensive study, \citet{alexander2008developing}
compare various well-known models in their ability to conduct
meaningful stress tests.  \citet{glasserman2015stress} develop an
empirical likelihood approach for the selection of stress scenarios,
with a focus on reverse stress testing. \citet{kopeliovich2015robust}
present a reverse stress testing method to determine scenarios that
lead to a specified loss level.
\citet{Breuer2009} and \citet{flood2015systematic} use the Mahalanobis
distances to select scenarios from a multivariate distribution of risk
factors. 

\citet{breuer2013systematic} extend these approaches and consider
various application scenarios, amongst them stressed default
correlations, which refer to the correlations of Bernoulli variables
denoting the default or survival of loans or
obligors. \citet{Studer1999} considers correlation breakdowns by
identifying the worst-case correlation scenario in a constrained
region of P\&L scenarios. However, solving the problem turns out to be
intractable in the sense that it is NP-hard. Also, the likelihood or
plausibility of such a correlation scenario is not known. The
difference in our setting is that we model correlation itself in a
parametric way and -- imposing a distribution assumption on the risk
factors driving correlation, e.g.\ calibrated from historical data --
find the risk-factor scenario that produces the worst loss within a
given range of plausible correlation scenarios.

The paper is structured as follows: In Section~\ref{sec:methodology}
we present the correlation stress testing methodology alongside
analytical results for both, the stress test and scenario selection
procedures. Section~\ref{sec:applicationtowhale} consists of a concise
review of the ``London Whale'' case as well as the results from
correlation stress testing the credit portfolio using the methods
developed in the previous section. Section~\ref{sec:conclusion}
concludes. A detailed review of the London Whale case is provided as
an online appendix at \ifANONYMOUS [anonymous version: Supplement -
Whale Story.pdf].  \else
\href{https://ssrn.com/abstract=3210536}{https://ssrn.com/abstract=3210536}.
\fi

\section{Correlation parameterisation and stress
  testing}
\label{sec:methodology} 

\subsection{Factor model}
The principal idea behind the correlation parameterisation developed
in this paper is to split portfolio correlations into dependence
contributions associated with several risk factors. With each risk
factor a parameter determining the degree of de-correlation on the
overall correlation is associated. Calibrating these parameters and
then adjusting them allows to translate specific economic scenarios
into changes on correlations.

More precisely, let $C$ be an $n\times n$-correlation matrix (i.e.,
positive semi-definite, symmetric, with entries in $[-1,1]$, and ones
on the diagonal) related to the returns of $n$ financial
instruments. In the context of the London Whale position analysed
later, the entries of $C$ are the correlations of credit index spread
returns and related tranche spread returns.\footnote{%
  In this setting correlations are not implied tranche correlations,
  which are used for pricing, but historical spread return
  correlations, as would be used in risk management.} In the London
Whale case these are typically positive (with only few exceptions near
zero, which are set to a small positive constant), and generally we
assume that all correlations are in $(0,1]$.

The factors that determine the correlations are denoted by
$\mathbf{x}=(x^1, \ldots, x^m)'$. In the context of the London Whale
position, the factors include the maturity, the index series, a dummy
variable determining whether the security is investment grade or not,
and others. Further choices could be factors relating to geographical
regions, industries or balance sheet data. Correlations $c_{ij}$ of
securities $i$ and $j$ are modelled as
\begin{equation*}
  c_{ij} = \exp\left(-(\beta_1 |x^1_i-x^1_j| + \beta_2 |x_i^2 - x^2_j| +
    \cdots + \beta_m |x^m_i-x^m_j|)\right), \quad i,j=1,\ldots, n,
\end{equation*}
with $\beta_1,\ldots, \beta_m$ positive coefficients, the parameters.
This is the simplest, most parsimonious functional form relating
differences in the risk factors with the correlations of the
securities. It implies that the greater the distance $|x_i^k-x_j^k|$,
the greater the de-correlation amongst the securities $i$ and $j$. If
two instruments are identical in all respects, then they are assigned
a correlation of $1$. With additional information about the
relationship between risk factors and correlations, other, more
complex functional forms may be feasible.  Similar approaches to
parameterising correlation matrices are common in interest-rate
modelling, see e.g.\ \citet{Schoenmakers2003,Rebonato2004,Brigo2006}.

In the simple model above, given historical returns, the parameters
$\beta_1,\ldots, \beta_m$ are easily determined by standard regression
techniques such as OLS on the transformed correlations
$-\ln(c_{ij})$. A scenario such as ``the correlation between
investment grade and high-yield securities decreases'' is then
implemented by increasing the corresponding $\beta$-parameter. With
parameters calibrated on a regular basis, the parameter history can be
used to obtain reasonable scenarios.

\subsection{Stress testing a homogeneous portfolio}
\label{sec:stress-test-homog}

To better understand the stress testing effect we consider a stylised,
homogeneous portfolio and derive closed formulas for the impact of
various correlation stress test scenarios. A homogeneous portfolio
reduces the number of parameters involved to a minimum and therefore
allows for a general understanding of the behaviour under stress.
This is similar to analysing diversification effects of a homogeneous
portfolio in standard Markowitz portfolio theory.

The setup is as follows: the $m$ risk factors are binary in the sense
that they express properties that are either present or absent in a
security. The number of securities is $n=2^m$ and they exhibit all
$2^m$ combinations of risk factor combinations (for example, one could
set
$(\1_{\{x_i^1\not=x_j^1\}}, \ldots, \1_{\{x_i^m\not=x_j^m\}})=
(i-1)\oplus (j-1)$, with $(i-1)\oplus (j-1)$ the bitwise XOR operator
of the binary representations of $i-1$ and $j-1$). As a consequence,
no two securities are equal in terms of their risk factor
exposure. The securities all have equal volatility, and the portfolio
is equally-weighted. However, as a consequence of choosing binary risk
factor combinations, the correlations are not homogeneous.

We assume that risk of the portfolio is measured by value-at-risk
(VaR) in a {\em variance-covariance approach}, i.e.,
\begin{equation}
  \label{eq:5}
  \text{VaR}_\alpha = -\Ncdf_{1-\alpha} \cdot  V_{0} \cdot
  \left(\mathbf w^\intercal\, \boldsymbol\Sigma\, \mathbf w\right)^{1/2},
\end{equation}
where $\Ncdf_{1-\alpha}$ denotes the $(1-\alpha)$-quantile of the
standard normal distribution, $V_{0}$ denotes the current position
value, $\mathbf w$ is the vector of portfolio weights and
$\boldsymbol\Sigma$ denotes the covariance matrix of the portfolio
components' returns with entries $\sigma^2$ along the diagonal
denoting the stand-alone variances of the assets. In this setting we
assume that the expected return is zero, which is a reasonable
assumption for short time horizons.

The normal distribution assumption can easily be generalised, e.g.\ to
a Student $t$-distri\-bution. We will use this more generalised setup in
Section \ref{sec:stress-test-corr} when we combine correlation and
volatility stress testing.

\begin{proposition}
  The portfolio variance is given by
  \begin{equation}
    \label{eq:9}
    \mathbf w^\intercal\, \boldsymbol\Sigma\, \mathbf w 
    = \frac{\sigma^2}{n} \prod_{k=1}^m \left(1+\e^{-\beta_k}\right),
  \end{equation}
  The average correlation amongst pairwise different asset returns is
  \begin{equation}
    \label{eq:10}
    \overline \rho(\boldsymbol\beta) = \frac{1}{(n-1)} \prod_{k=1}^m
    \left(1+\e^{-\beta_k}\right) -\frac{1}{n-1}. 
  \end{equation}
\end{proposition}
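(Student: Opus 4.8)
The plan is to exploit the product structure of the correlation model. Writing $\delta^k_{ij} := \1_{\{x^k_i \neq x^k_j\}}$, the assumed form gives $c_{ij} = \prod_{k=1}^m \e^{-\beta_k \delta^k_{ij}}$, and since the $m$ risk factors are binary and the $n = 2^m$ securities realise every factor combination, I would identify security $i$ with a vector $\mathbf b_i \in \{0,1\}^m$ in such a way that $\{\mathbf b_1, \dots, \mathbf b_n\} = \{0,1\}^m$ and $\delta^k_{ij} = \1_{\{b^k_i \neq b^k_j\}}$, where $b^k_i$ denotes the $k$-th component of $\mathbf b_i$. Because the portfolio is equally weighted, $\mathbf w = (1/n, \dots, 1/n)^\intercal$, and $\boldsymbol\Sigma_{ij} = \sigma^2 c_{ij}$, so that
\[
  \mathbf w^\intercal\,\boldsymbol\Sigma\,\mathbf w \;=\; \frac{\sigma^2}{n^2}\sum_{i=1}^n\sum_{j=1}^n c_{ij},
\]
and the whole statement reduces to computing $S := \sum_{i,j} c_{ij}$.

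The key step is that $S$ factorises over the $m$ coordinates. Each $c_{ij}$ is a product whose $k$-th factor depends on $\mathbf b_i, \mathbf b_j$ only through their $k$-th components, so summing over all pairs $(\mathbf b_i, \mathbf b_j) = (\mathbf y, \mathbf z) \in \{0,1\}^m \times \{0,1\}^m$ distributes into a product of independent per-coordinate sums,
\[
  S \;=\; \sum_{\mathbf y, \mathbf z \in \{0,1\}^m}\ \prod_{k=1}^m \e^{-\beta_k\1_{\{y^k \neq z^k\}}}
  \;=\; \prod_{k=1}^m\left(\ \sum_{(s,t) \in \{0,1\}^2} \e^{-\beta_k\1_{\{s \neq t\}}}\ \right).
\]
For each $k$ the inner sum has two terms with $s = t$, each equal to $1$, and two with $s \neq t$, each equal to $\e^{-\beta_k}$, so it equals $2(1 + \e^{-\beta_k})$. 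Hence $S = \prod_{k=1}^m 2(1 + \e^{-\beta_k}) = 2^m \prod_{k=1}^m (1 + \e^{-\beta_k}) = n \prod_{k=1}^m (1 + \e^{-\beta_k})$, and inserting this into the previous display yields $(\ref{eq:9})$.

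For the average correlation among pairwise different assets, note that of the $n^2$ ordered index pairs exactly $n$ are diagonal with $c_{ii} = 1$, leaving $n(n-1)$ off-diagonal pairs, and symmetry of $c$ makes the ordered/unordered distinction irrelevant. Therefore
\[
  \overline\rho(\boldsymbol\beta) \;=\; \frac{1}{n(n-1)}\sum_{i \neq j} c_{ij} \;=\; \frac{S - n}{n(n-1)} \;=\; \frac{1}{n-1}\prod_{k=1}^m (1 + \e^{-\beta_k}) \;-\; \frac{1}{n-1},
\]
which is $(\ref{eq:10})$. I do not expect a genuine obstacle here: the mathematical content is the distributivity identity for $S$, and the only point requiring care is the combinatorial bookkeeping — tagging each security with a distinct element of $\{0,1\}^m$ and verifying that, in the factorised sum, the indicator $\1_{\{y^k \neq z^k\}}$ takes the values $0$ and $1$ equally often as $(\mathbf y, \mathbf z)$ varies.
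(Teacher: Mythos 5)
Your proposal is correct and uses essentially the same idea as the paper: both exploit the factorisation of the double sum $\sum_{i,j} c_{ij}$ over the $m$ binary coordinates of the hypercube $\{0,1\}^m$, the paper peeling off one factor $\left(1+\e^{-\beta_k}\right)$ at a time by splitting the inner sum in half and iterating $m$ times, while you write the same factorisation in one step as a product of per-coordinate sums. The second part (subtracting the $n$ diagonal terms and dividing by $n(n-1)$) is identical in both.
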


\begin{proof} 
  Assuming without loss of generality that the bitwise XOR operator
  described above defines the value of the indicator variables, the
  portfolio variance simplifies to
  \begin{align*}
    \mathbf w^\intercal\, \boldsymbol\Sigma\, \mathbf w 
    &= \frac{1}{n^2} \sum_{i=1}^n \sum_{j=1}^n \e^{-\sum_{k=1}^m
      \beta_k \cdot \1_{\{x_i^k\not=x_j^k\}}}\sigma^2\\ %
    &= \frac{\sigma^2}{n^2} \sum_{i=1}^n \left[\sum_{j=1}^{n/2}
      \e^{-\sum_{k=1}^{m-1} \beta_k  \cdot
      \1_{\{x_i^k\not=x_j^k\}} - \beta_m\cdot 0} + \sum_{j=n/2+1}^n
      \e^{-\sum_{k=1}^{m-1} \beta_k  \cdot \1_{\{x_i^k\not=x_j^k\}} -
      \beta_m\cdot 1}\right]\\%
    &= \frac{\sigma^2}{n^2} \sum_{i=1}^n \left(1+\e^{-\beta_m}\right)
      \cdot \sum_{j=1}^{n/2}  \e^{-\sum_{k=1}^{m-1} \beta_k  \cdot
      \1_{\{x_i^k\not=x_j^k\}}}.
  \end{align*}
  Iterating this calculation $m=\log_2 n$ times gives
  \begin{equation*}
    \mathbf w^\intercal\, \boldsymbol\Sigma\, \mathbf w %
    = \frac{\sigma^2}{n^2} \sum_{i=1}^n \prod_{k=1}^m
    \left(1+\e^{-\beta_k}\right) %
    = \frac{\sigma^2}{n} \prod_{k=1}^m
    \left(1+\e^{-\beta_k}\right). %
  \end{equation*}
  The average correlation is given as
  \begin{equation*}
    \frac{1}{n(n-1)} \sum_{i=1}^n \sum_{j=1,j\not=i}^n \e^{-\sum_{k=1}^m
      \beta_k \cdot \1_{\{x_i^k\not=x_j^k\}}} %
    =   \frac{1}{n(n-1)} \left[\sum_{i=1}^n \sum_{j=1,}^n \e^{-\sum_{k=1}^m
        \beta_k \cdot \1_{\{x_i^k\not=x_j^k\}}} - \sum_{i=1}^n
      1\right],
  \end{equation*}
  and the claim follows from the first part of the Proposition.
\end{proof}

\begin{corollary}
  The sensitivity of the variance with respect to a single
  $\beta$-factor $\beta_l$ is
  \begin{equation*}
    \frac{\partial \mathbf w^\intercal \boldsymbol \Sigma \mathbf
      w}{\partial \beta_l} %
    = -\frac{\sigma^2}{n^2} \, \e^{-\beta_l}\cdot \prod_{k\not=l}
    \left(1+\e^{-\beta_k}\right).
  \end{equation*}
  If the $\beta$-factors are homogeneous, i.e.,
  $\beta_1=\cdots=\beta_m=\beta$, then the overall sensitivity is
  \begin{equation*}
    \frac{\partial \mathbf w^\intercal \boldsymbol \Sigma \mathbf w}
    {\partial \beta} %
    = \frac{\partial \frac{\sigma^2}{n^2}
      \left(1+\e^{-\beta}\right)^m} {\partial \beta} %
    = - \frac{\sigma^2}{n} \frac{m\,
      \left(1+\e^{-\beta}\right)^m} {1+\e^\beta}. 
  \end{equation*}
\end{corollary}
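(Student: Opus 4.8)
The plan is to read the Corollary off directly from equation~\eqref{eq:9}; it is a pure differentiation and needs no new probabilistic input. Set $g(\boldsymbol\beta) := \mathbf w^\intercal\,\boldsymbol\Sigma\,\mathbf w = \frac{\sigma^2}{n}\prod_{k=1}^m\bigl(1+\e^{-\beta_k}\bigr)$. Since this is a product of $m$ factors, each depending on exactly one coordinate of $\boldsymbol\beta$, differentiating with respect to $\beta_l$ touches only the $l$-th factor, and $\frac{\partial}{\partial\beta_l}\bigl(1+\e^{-\beta_l}\bigr) = -\e^{-\beta_l}$. This immediately yields $\frac{\partial g}{\partial\beta_l} = -\frac{\sigma^2}{n}\,\e^{-\beta_l}\prod_{k\neq l}\bigl(1+\e^{-\beta_k}\bigr)$, the first assertion. (If one instead differentiates the uncollapsed form $\frac{\sigma^2}{n^2}\sum_{i=1}^n\prod_k(1+\e^{-\beta_k})$ appearing in the proof of the Proposition, the inner sum over $i$ is constant in $\boldsymbol\beta$ and merely contributes the factor $n$; the only thing to watch is keeping the $1/n$ versus $1/n^2$ normalisation straight.)

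For the homogeneous statement I would specialise to $\beta_1=\cdots=\beta_m=\beta$. One can either sum the $m$ single-factor sensitivities just obtained — by symmetry each equals $-\frac{\sigma^2}{n}\,\e^{-\beta}\bigl(1+\e^{-\beta}\bigr)^{m-1}$ — or, equivalently, differentiate $g(\beta)=\frac{\sigma^2}{n}\bigl(1+\e^{-\beta}\bigr)^m$ by the chain rule; either way one arrives at $g'(\beta) = -\frac{\sigma^2}{n}\,m\,\bigl(1+\e^{-\beta}\bigr)^{m-1}\e^{-\beta}$.

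The remaining step is the algebraic rewriting into the displayed form: peel off one factor $(1+\e^{-\beta})$ to write $\bigl(1+\e^{-\beta}\bigr)^{m-1}\e^{-\beta} = \bigl(1+\e^{-\beta}\bigr)^m\cdot\frac{\e^{-\beta}}{1+\e^{-\beta}}$, then multiply numerator and denominator of the last fraction by $\e^{\beta}$ to get $\frac{\e^{-\beta}}{1+\e^{-\beta}} = \frac{1}{1+\e^{\beta}}$. This gives $\frac{\partial g}{\partial\beta} = -\frac{\sigma^2}{n}\,\frac{m\,\bigl(1+\e^{-\beta}\bigr)^m}{1+\e^{\beta}}$, as claimed.

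I do not anticipate any genuine obstacle: the content is entirely contained in equation~\eqref{eq:9}, and what is left is one application of the product/chain rule together with an elementary simplification. The only points meriting a moment's care are the bookkeeping of the $1/n$ versus $1/n^2$ prefactor and the overall sign, which is negative throughout, reflecting that increasing a de-correlation parameter $\beta_l$ decreases the portfolio variance.
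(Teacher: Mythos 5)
Your proof is correct and is exactly the route the paper intends: the Corollary is stated without proof and follows by differentiating the product formula (\ref{eq:9}), and your reduction of $\left(1+\e^{-\beta}\right)^{m-1}\e^{-\beta}$ to $\left(1+\e^{-\beta}\right)^{m}/\left(1+\e^{\beta}\right)$ reproduces the printed final expression. One thing you should have said explicitly rather than merely hinting at with your ``$1/n$ versus $1/n^2$'' remark: your (correct) computation yields the prefactor $\sigma^2/n$, whereas the Corollary as printed carries $\sigma^2/n^2$ in the first display and in the intermediate expression of the second display; since the Proposition gives $\mathbf w^\intercal\boldsymbol\Sigma\mathbf w = \frac{\sigma^2}{n}\prod_{k}\left(1+\e^{-\beta_k}\right)$ and the Corollary's final expression itself reverts to $\sigma^2/n$, those $n^2$'s are typos, and your derivation silently corrects the statement rather than proving it verbatim.
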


For stress testing we associate a probability distribution with
$\boldsymbol\beta$. This allows to formulate scenarios in
probabilistic terms and determine their impact. We assume that the
risk factors themselves are homogeneous, i.e., $\beta_k=\beta$, for
all $k=1,\ldots, m$, and that changes in the risk factor coefficients
$\Delta \boldsymbol\beta$ are jointly normally distributed, each one
with mean $0$ and variance $\sigma_\beta^2$ and with correlations
$\rho_\beta$.

Following e.g.\ \cite{Kupiec1998}, we define a stress scenario on one
set of (``core'') risk factors and, assuming that the given covariance
matrix is unaltered by the stress scenario, set the remaining
(``peripheral'') risk factors to their optimal estimates conditional
on the scenario. Let $\boldsymbol\beta_s$ denote the $j<m$ core factor
parameters that are stressed directly. The remaining $m-j$ peripheral
risk factor parameters $\boldsymbol\beta_u$ are affected by the stress
scenario only indirectly. Under the normal distribution setting above,
it holds that the optimal estimator of $\Delta \boldsymbol\beta_u$
conditional on $\Delta \boldsymbol\beta_s$ is (e.g.\ Theorem \S 13.2
of \citealp{Shiryaev1996}):
\begin{equation*}
  \E(\Delta \boldsymbol\beta_u|\Delta \boldsymbol\beta_s) =
  \Sigma_{us} \Sigma_{ss}^{-1} \Delta \boldsymbol\beta_s,
\end{equation*}
where $\Sigma_{us}$ and $\Sigma_{ss}$ denote the covariance and
variance matrices of $\boldsymbol\beta_u$ and $\boldsymbol\beta_s$.

\begin{proposition}
  \label{prop:correlationstress}
  The portfolio variance when $j$ of the $\beta$-risk factors
  coefficients are stressed by $\Delta\beta$ is given by
  \begin{equation*}
    \mathbf w^\intercal\, \boldsymbol\Sigma\, \mathbf w 
    = \frac{\sigma^2}{n} \left(1+\e^{-(\beta+\Delta \beta)}\right)^j
    \cdot \left(1+\e^{-\left(\beta + \frac{j\cdot\rho_\beta}
          {(j-1)\rho_\beta+1} \Delta \beta\right)}\right)^{m-j}. 
  \end{equation*}
\end{proposition}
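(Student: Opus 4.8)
The plan is to combine the closed form for the portfolio variance with an explicit evaluation of the conditional-expectation prescription for the peripheral parameters. First I would note that the variance identity \eqref{eq:9} was in fact derived for \emph{arbitrary} coefficients $\beta_1,\dots,\beta_m$: the iteration in the proof of the previous Proposition never used $\beta_1=\cdots=\beta_m$. Hence, once the stressed coefficient vector is known, the stressed portfolio variance is obtained by substituting the new values into $\frac{\sigma^2}{n}\prod_{k=1}^m\bigl(1+\e^{-\beta_k}\bigr)$, and the whole task reduces to identifying those new values.

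By construction the $j$ core coefficients all move from $\beta$ to $\beta+\Delta\beta$, contributing the factor $\bigl(1+\e^{-(\beta+\Delta\beta)}\bigr)^{j}$. For the $m-j$ peripheral coefficients I would apply $\E(\Delta\boldsymbol\beta_u\mid\Delta\boldsymbol\beta_s)=\Sigma_{us}\Sigma_{ss}^{-1}\Delta\boldsymbol\beta_s$ with $\Delta\boldsymbol\beta_s=\Delta\beta\,\mathbf 1_j$, the common shift applied to the $j$ core factors. Under the stated covariance assumptions $\Sigma_{ss}=\sigma_\beta^2\bigl((1-\rho_\beta)I_j+\rho_\beta\,\mathbf 1_j\mathbf 1_j^\intercal\bigr)$ is an equicorrelation matrix and $\Sigma_{us}=\rho_\beta\sigma_\beta^2\,\mathbf 1_{m-j}\mathbf 1_j^\intercal$, so everything hinges on computing $\Sigma_{ss}^{-1}\mathbf 1_j$.

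Applying the Sherman--Morrison formula (equivalently, the standard inverse of an equicorrelation matrix) gives $\Sigma_{ss}^{-1}\mathbf 1_j=\frac{1}{\sigma_\beta^2\,(1+(j-1)\rho_\beta)}\,\mathbf 1_j$, the factor $1-\rho_\beta$ cancelling. Multiplying on the left by $\Sigma_{us}$, using $\mathbf 1_j^\intercal\mathbf 1_j=j$, and scaling by $\Delta\beta$ yields $\E(\Delta\boldsymbol\beta_u\mid\Delta\boldsymbol\beta_s)=\frac{j\rho_\beta}{(j-1)\rho_\beta+1}\,\Delta\beta\;\mathbf 1_{m-j}$, i.e.\ each peripheral coefficient shifts from $\beta$ to $\beta+\frac{j\rho_\beta}{(j-1)\rho_\beta+1}\Delta\beta$. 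Substituting the $j$ core values and the $m-j$ peripheral values into $\frac{\sigma^2}{n}\prod_{k=1}^m\bigl(1+\e^{-\beta_k}\bigr)$ delivers the asserted formula.

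I expect the only mildly delicate point to be the matrix inversion: tracking the equicorrelation structure of $\Sigma_{ss}$ and verifying the cancellation that turns $1-\rho_\beta$ into the clean denominator $1+(j-1)\rho_\beta$. A secondary point worth one line is that all core shifts being equal to the same $\Delta\beta$ is exactly what makes $\Delta\boldsymbol\beta_s$ proportional to $\mathbf 1_j$, hence the conditional mean proportional to $\mathbf 1_{m-j}$, so that the peripheral block contributes a pure power in the product; everything else is direct substitution into \eqref{eq:9}.
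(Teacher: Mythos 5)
Your proposal is correct and follows exactly the route the paper takes: the paper's one-line proof just asserts that the entries of $\Sigma_{us}\Sigma_{ss}^{-1}$ all equal $\rho_\beta/((j-1)\rho_\beta+1)$, so that each peripheral coefficient shifts by $\frac{j\rho_\beta}{(j-1)\rho_\beta+1}\Delta\beta$, and then substitutes into the general product formula $\frac{\sigma^2}{n}\prod_{k=1}^m\bigl(1+\e^{-\beta_k}\bigr)$. You simply make explicit the equicorrelation inversion (and the observation that $\mathbf 1_j$ is an eigenvector of $\Sigma_{ss}$, which makes Sherman--Morrison unnecessary) that the paper leaves as ``easily verified''.
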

\begin{proof}
  It is easily verified that the entries of the $(m-j)\times j$-matrix
  $\Sigma_{us}\Sigma_{ss}^{-1}$ are
  $\displaystyle\frac{\rho_\beta}{(j-1)\rho_\beta+1}$.
\end{proof}

\begin{figure}[t] 
  \centering \includegraphics[scale=.74]{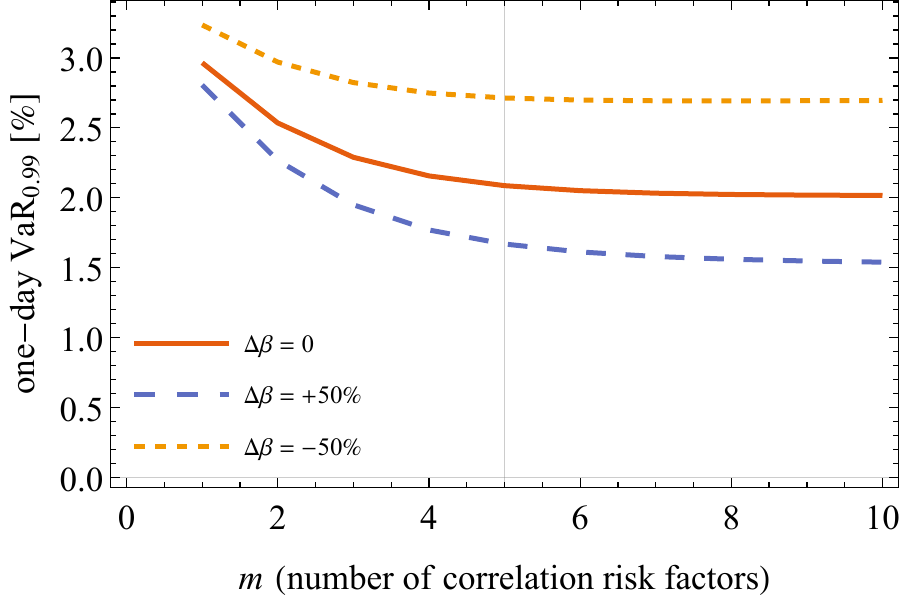}\ \ \ \
  \ \
  \includegraphics[scale=.75]{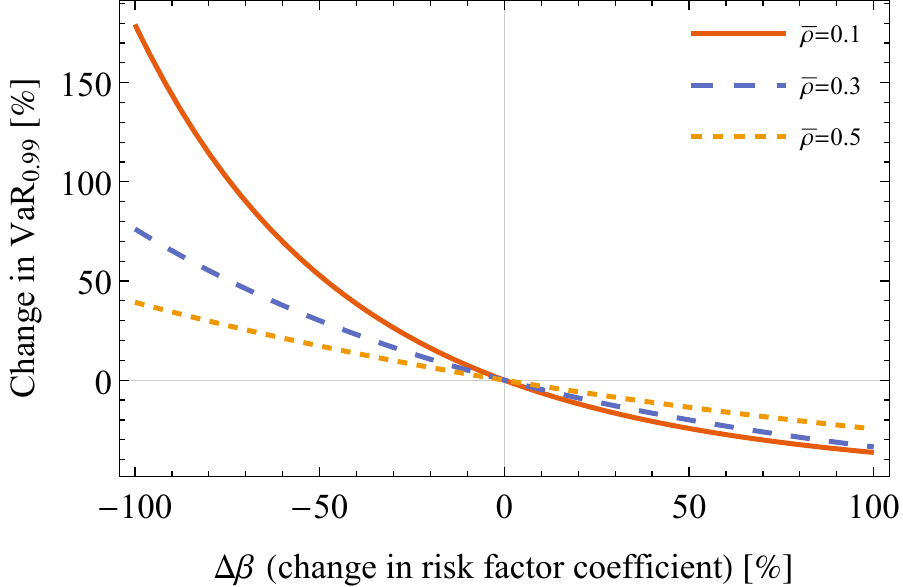}\\
  \vspace{.25cm} \includegraphics[scale=.75]{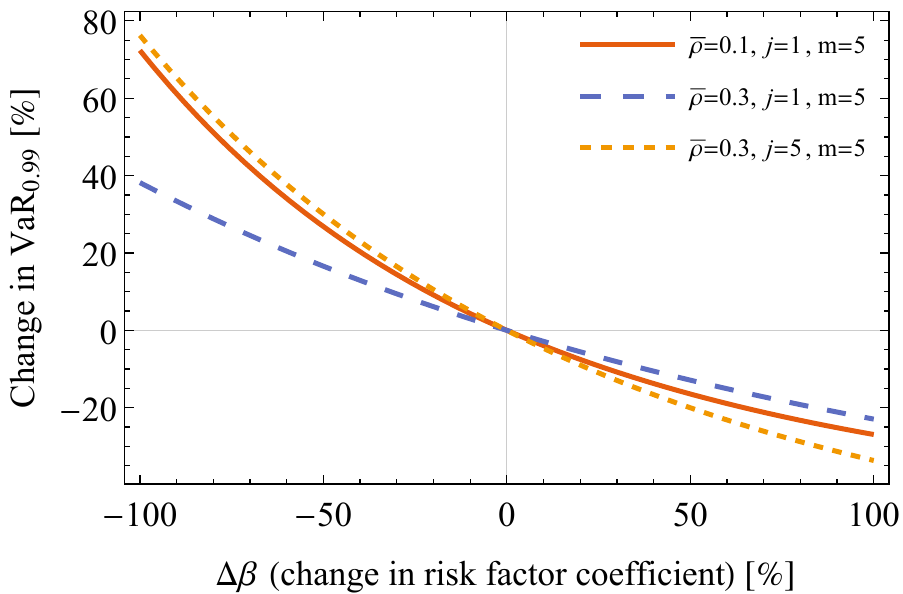}\
  \ \ \ \ \ \includegraphics[scale=.75]{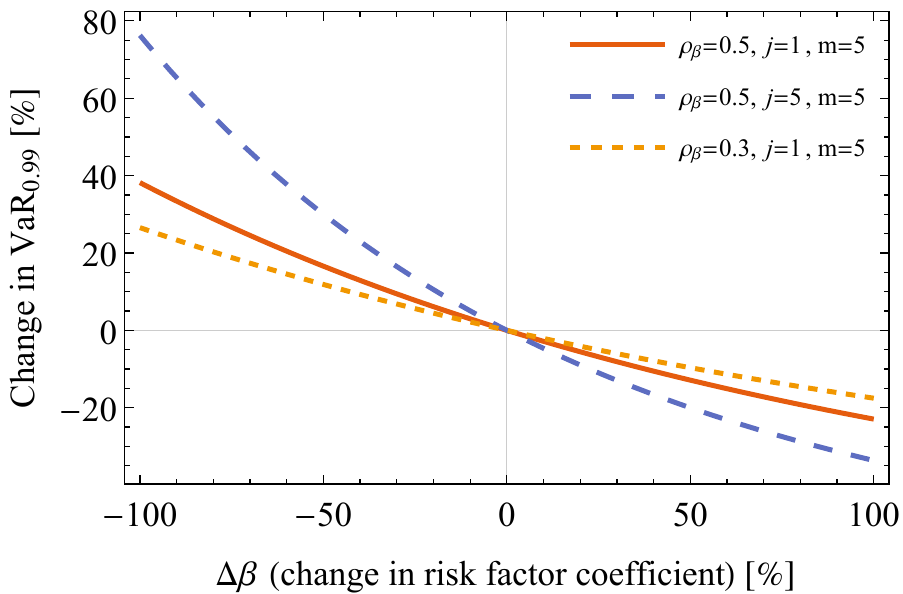}
  \caption{ Top left: Portfolio VaR as a function of the number of
    risk factors $m$ (cf. Proposition 1).  Top right: Change in
    portfolio VaR as a function of $\Delta\beta$ (cf. Proposition 1),
    calibrated to different initial average asset correlations
    $\overline\rho \in \{0.1,0.3,0.5\}$.  Bottom left: Change in
    portfolio VaR as a function of $\Delta\beta$ with peripheral risk
    factor changes that correlate with $\rho_\beta = 0.5$
    (cf. Proposition 2).  Bottom right: Change in portfolio VaR as a
    function of $\Delta\beta$ with correlated peripheral risk factor
    changes (cf. Proposition 3).  All graphs show a 99\%
    Value-at-Risk, and the initial and unstressed $\beta$ is
    calibrated to an average asset correlation of
    $\overline\rho\approx0.3$, unless indicated otherwise (cf.\
    Equation (\ref{eq:10})).}
  \label{fig:propositionsrecalibrated}
\end{figure}

To illustrate the impact of a correlation stress, we apply the above
results to a portfolio of $n=2^m$ assets, where each asset has an
annualised volatility of $\sigma=0.25$ and the average asset
correlation in the portfolio is $0.3$ unless specified
differently. The risk factor coefficient $\beta$ is calibrated to
reflect the target asset correlation for the number of factors $m$,
e.g. with $m=5$ this is achieved by $\beta=0.5204$ (cf.\ Equation
(\ref{eq:10})). Figure \ref{fig:propositionsrecalibrated} shows that
the portfolio VaR decreases as a function of both the number of
correlation risk factors $m$ and the risk factor coefficient
$\beta$. Both results are not surprising as the increasing number of
assets diversifies away idiosyncratic risk leaving only systematic
risk, which in addition, is lowered by an increasing $\beta$. This
result is driven by the long-only structure of the portfolio. The risk
of a portfolio that contains hedging positions may behave differently,
as will be observed in the London Whale example in
Section~\ref{sec:applicationtowhale}.
  
The impact of changes in $\beta$ to the overall portfolio VaR is shown
in the top right graph of
Figure~\ref{fig:propositionsrecalibrated}. For this long-only
portfolio the VaR decays with decreasing correlation. The lower
boundary of $0$ for $\beta$, i.e. a reduction of 100\% in
Figure~\ref{fig:propositionsrecalibrated}, identifies the worst case
scenario.  The impact of $\Delta\beta$ is asymmetric, where the risk
of correlation changes is higher than the benefit. Diversification
benefits through changes in correlation are possible for diversified
portfolios under non-perfect correlation. For hedging portfolios,
which often rely on very high or almost perfect correlation,
correlation changes are most often undesirable and considered a risk.
  
The steepness or sensitivity of VaR to changes in $\beta$ is
determined by the initial average asset correlation, the correlation
between risk factors, as well as the number of factors $j \leq m$ that
are initially stressed (see bottom graphs in
Figure~\ref{fig:propositionsrecalibrated}). In other words, an
initially high average asset correlation reduces the risk of changing
correlations, because the portfolio is already poorly diversified. The
VaR impact when stressing a subset of $j$ factors increases with
$\rho_\beta$, which captures the dependence between correlation risk
factors.

\subsection{Joint stress test of correlation and volatility}
\label{sec:stress-test-corr}

It is well documented that large changes in correlation coincide with
volatility shocks, e.g. \citep{Alexander2008,
  longin2001extreme, Loretan2000}. To this end, we develop a simple
technique that combines both stress scenarios. The principal idea is
to assume that a $d$-dimensional vector of asset returns $\mathbf X$
follows a Student $t$-distribution,
$\mathbf X\sim t(\mathbf {\tilde\Sigma}, \nu)$, with $\nu>2$
and with $\tilde \Sigma$ a matrix describing the dependence as
  explained further below, where we
assume for simplicity that expected asset returns are zero. Then,
$\mathbf X$ follows a 
normal variance mixture distribution with decomposition (cf.\ Chapter
6.2 of \citet{McNeil2015}) 
\begin{equation*}
  \mathbf X=\sqrt{V}\cdot A\cdot\mathbf Z,
\end{equation*}
where $\mathbf Z\sim \Ncdf(0,I_k)$, i.e., $\mathbf Z$ is a vector of
independent standard normally distributed random variables, $V$ is
independent of $\mathbf Z$ and $V\sim \text{Ig}(1/2\, \nu,1/2\,\nu)$,
i.e., the mixing variable $V$ follows an inverse gamma distribution,
and $A$ is a $d\times k$ matrix such that
$\mathbf {\tilde\Sigma}=AA^T$. Because of
$\E V=\displaystyle \frac{\nu}{\nu-2}$, the covariance matrix of
$\mathbf X$ is
$\mathbf \Sigma = \displaystyle\frac{\nu}{\nu-2}
\mathbf{\tilde\Sigma}$ (note that the expectation and covariance
matrix are defined only if $\nu>2$). The correlation matrices of
$\mathbf X$ and $A\mathbf Z$ are the same.

Under the assumption of a $t$-distribution, the $t$-VaR at level $\alpha$
is, cf.\ Equation~(\ref{eq:5})
\begin{equation}
  \text{VaR}^t_\alpha = -t_{\nu,1-\alpha} \cdot  V_{0} 
  \left(\mathbf w^\intercal\, \boldsymbol{\tilde\Sigma}\, \mathbf
    w\right)^{1/2} = %
  -t_{\nu,1-\alpha} \cdot  V_{0} \cdot  \left(\frac{\nu-2}{\nu}\right)^{1/2}
  \left(\mathbf w^\intercal\, \boldsymbol\Sigma\, \mathbf w\right)^{1/2}.
\end{equation}

Volatility stress at the level $\tilde\alpha\in [0,1]$ is introduced
by setting $V$ to the $\tilde\alpha$-quantile $q_{\tilde\alpha}$ of
the $\text{Ig}(1/2\,\nu,1/2\,\nu)$ distribution. This conveniently
captures that the volatility stress induced is a systematic
event. Furthermore, the severity of the stress event depends on the
heaviness of the tails, expressed by $\nu$. The VaR in this scenario
is determined from
\begin{equation}
  \label{eq:12}
  \p\left(\mathbf w^T X \leq \text{VaR}_{\alpha}|V=q_{\tilde\alpha}\right) %
  = \p\left(\sqrt{q_{\tilde\alpha}}\mathbf w^T A\mathbf Z \leq
    \text{VaR}_{\alpha}\right), %
\end{equation}
with
$\mathbf w^T A\mathbf Z\sim \Ncdf(0,\mathbf w^T
\mathbf{\tilde\Sigma}\mathbf w)$. Consequently, the stressed $t$-VaR
is derived from Equation~(\ref{eq:12}) as a normal distribution VaR
  with the standard deviation scaled according to the fixed mixing
  variable contribution: 
\begin{equation}
  \label{eq:11}
  \text{VaR}^t_{\alpha,\tilde\alpha} = -\Ncdf_{1-\alpha} \cdot V_0\cdot
  \sqrt{q_{\tilde\alpha}} (\mathbf w^T \mathbf{\tilde\Sigma}\mathbf
  w)^{1/2} %
  = -\Ncdf_{1-\alpha} \cdot V_0\cdot
 \sqrt{ q_{\tilde\alpha}}
 \left(\frac{\nu-2}{\nu}\right)^{1/2}  (\mathbf w^T 
  \mathbf{\Sigma}\mathbf w)^{1/2}.  
\end{equation}
To achieve a joint volatility and correlation stress, both methods are
combined: a scaling factor determined from the quantile of the mixing
variable as in Equation~(\ref{eq:11}) is applied independently of a
correlation scenario $\Delta\beta$ as in Proposition
\ref{prop:correlationstress}.

\subsection{Stress test scenario selection}
\subsubsection{Mahalanobis distance}
\label{sec:mahalanobis-distance}

When stress testing, aside from understanding the impact of given
scenarios, one is also interested in the converse question: What is
the worst scenario amongst all scenarios that occur within some
pre-given range?  One way to specify the range is via the so-called
Mahalanobis distance, which measures the distance of a realisation of
a normally distributed random vector from its mean.

Recall that correlations $c_{i,j}$ are modelled as
\begin{align*}
  c_{ij} &= \exp\left(-(\beta_1 |x^1_i-x^1_j| + \beta_2 |x_i^2 - x^2_j| +
           \cdots + \beta_k |x^m_i-x^m_j|) \right), \quad i,j=1,\ldots, n,
\end{align*}
with positive parameters $\beta_1,\ldots, \beta_m$.  If
$\boldsymbol \beta= (\beta_1, \ldots, \beta_m)^\intercal$ is a random
vector with $\E(\boldsymbol\beta)=\boldsymbol{\overline\beta}$ and
covariance matrix $\boldsymbol\Sigma_{\boldsymbol\beta}$, then the
{\em Mahalanobis distance\/} is defined as
\begin{equation*}
  D(\boldsymbol{\beta}) 
  =\left( (\boldsymbol{\beta} - \boldsymbol{\overline\beta})^\intercal
    \boldsymbol{\Sigma}^{-1}_{\boldsymbol\beta} 
    (\boldsymbol{\beta} - \boldsymbol{\overline\beta})\right)^{1/2}.
\end{equation*}
Furthermore, if
$\boldsymbol\beta\sim\Ncdf(\boldsymbol{\overline\beta},
\mathbf\Sigma_{\boldsymbol\beta})$, then the square of the Mahalanobis
distance follows a chi-squared distribution, i.e.,
$D^2(\boldsymbol\beta) \sim \chi^2(m)$.\footnote{The approach can
  easily be extended to heavy-tailed distributions, by assuming that
  $\boldsymbol \beta$ follows an elliptic distribution.}

We are interested in identifying the worst-case scenario
$\boldsymbol\beta^\ast$ that maximises VaR subject to a constraint on
the Mahalanobis distance:
\begin{equation*}
  \boldsymbol\beta^\ast = \argmax_{\boldsymbol\beta:
    D^2(\boldsymbol\beta)\leq h} \text{VaR}_\alpha(\boldsymbol\beta), 
\end{equation*}
where $\text{VaR}_\alpha$ is given by Equation (\ref{eq:5}) with
correlation matrix imposed by $\boldsymbol\beta$. If the parameter $h$
in the constraint is chosen as the $\alpha^\ast$-quantile of the
$\chi^2(m)$-distribution, then $\boldsymbol\beta^\star$ expresses the
worst correlation scenario amongst all scenarios that lie on the inner
ellipsoids covering a probability of $\alpha^\ast$.  From
Equation~(\ref{eq:5}) it is obvious that maximising
$\text{Var}_\alpha$ does not depend on $\alpha$ and is equivalent to
maximising the variance. A trivial consequence is that
$\boldsymbol\beta^\ast$ also maximises expected shortfall
$\displaystyle \text{ES}_\alpha = \frac{1}{1-\alpha} \int_\alpha^1
\text{VaR}_u\, \dd u$. Writing the diagonal matrix with the standard
deviations as the entries on the diagonal as
$\boldsymbol\sigma=(\text{diag}(\boldsymbol\Sigma(\beta)))^{\frac{1}{2}}$,
gives
\begin{equation*}
  \boldsymbol\beta^\ast = \argmax_{\boldsymbol\beta:
    D^2(\boldsymbol\beta)\leq h}\mathbf w^\intercal
  \boldsymbol\Sigma(\boldsymbol\beta)\mathbf w %
  = \argmax_{\boldsymbol\beta:
    D^2(\boldsymbol\beta)\leq h}\mathbf w^\intercal
  (\boldsymbol\sigma\, \boldsymbol C(\boldsymbol\beta) \,\boldsymbol\sigma)
  \mathbf w 
  = \argmax_{\boldsymbol\beta: D^2(\boldsymbol\beta)\leq h}
  \sum_{i=1}^n \sum_{j=1}^n w_i\, w_j\, \sigma_i\, \sigma_j\,
  c_{ij}(\boldsymbol\beta). 
\end{equation*}
The Lagrangian is
\begin{align*}
  \mathcal L &= \boldsymbol w^\intercal(\boldsymbol\sigma\,
               \boldsymbol C(\beta) \,\boldsymbol\sigma) 
               \mathbf w +  \lambda((\boldsymbol{\beta} -
               \boldsymbol{\overline\beta})^\intercal
               \boldsymbol{\Sigma}^{-1}_{\boldsymbol\beta}  
               (\boldsymbol{\beta} - \boldsymbol{\overline\beta})-h)\\ 
             &= \sum_{i=1}^n \sum_{j=1}^n w_i\, w_j\, \sigma_i\, \sigma_j\,
               c_{ij}(\boldsymbol\beta) + \lambda \Big(\sum_{i=1}^m
               \sum_{j=1}^m
               (\beta_i-\overline\beta_i)(\beta_j-\overline\beta_j) 
               q_{ij} - h\Big),
\end{align*}
with $q_{ij}$ the entries of
$\boldsymbol{\Sigma}^{-1}_{\boldsymbol\beta}$.

The first-order conditions are
\begin{align}
  \frac{\partial}{\partial \beta_l} \mathcal L
  &= -\sum_{i,j=1}^n w_i w_j \sigma_i\sigma_j \e^{-\sum_{k=1}^m
    \beta_k |x_i^k-x_j^k|} \cdot |x_i^l-x_j^l| %
    + 2\lambda \sum_{j=1}^m (\beta_j-\overline\beta_j)q_{lj} =0, \quad
    l=1,\ldots, m \label{eq:3}\\
  \frac{\partial}{\partial \lambda} \mathcal L
  &= D^2(\boldsymbol\beta)-h=0\label{eq:8}
\end{align}

Assuming that all factors are indicators measuring if a property is
present in both securities or not, i.e.,
$|x_i^l-x_j^l|=\1_{\{x_i^l\not=x_j^l\}}$ gives
\begin{align}
  \frac{\partial}{\partial \beta_l} \mathcal L 
  &= - \e^{-\beta_l} \, \underbrace{\sum_{i,j=1}^n w_i w_j
    \sigma_i\sigma_j \e^{-\sum_{k=1,k\not=l}^m 
    \beta_k \1_{\{x_i^k\not= x_j^k\}}
    } \cdot \1_{\{x_i^l\not= x_j^l\}}}_{=c_{l,1}}\nonumber \\
  &\phantom{=\,} + \underbrace{2\lambda \Big(\sum_{j=1,j\not=l}^m
    (\beta_j-\overline\beta_j)q_{lj}\Big) - 2\lambda \overline\beta_l
    q_{ll}}_{=c_{l,2}}  + 
    \underbrace{2\lambda q_{ll}}_{=c_{l,3}} \beta_l\nonumber \\
  &= -c_{l,1}  \e^{-\beta_l} + c_{l,2} + c_{l,3} \beta_l= 0, \quad
    l=1,\ldots, k\label{eq:6}\\
  \frac{\partial}{\partial \lambda} \mathcal L
  &= D^2(\boldsymbol\beta)-h=0 \label{eq:7}
\end{align}

Assuming throughout that the factors are chosen in such a way that at
least for one pair of securities the respective indicator is $1$
implies that $c_{l,1}\not=0$, for all $l=1,\ldots, k$.

\begin{proposition}
  The solutions to (\ref{eq:6}) satisfy
  \begin{equation*}
    \beta_l^\ast = W\left(\frac{c_{l,1} \e^{c_{l,2}/c_{l,3}}}
      {c_{l,3}}\right) - \frac{c_{l,2}} {c_{l,3}},\quad l=1,\ldots, k,
  \end{equation*}
  where $W(z)$ is the {\em Lambert $W$-function} (also called {\em
    product logarithm}), which gives the solution for $w$ in
  $z=w\, \e^w$, $z\in \mathbb C$.
\end{proposition}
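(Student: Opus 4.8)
The plan is to treat (\ref{eq:6}) as a scalar equation in the single unknown $\beta_l$ (with $\lambda$ and the remaining $\beta_k$ regarded as parameters) and to invert it via the substitution that reduces it to the defining relation $z = w\,\e^{w}$ of the Lambert $W$-function. First I would observe that the multiplier cannot vanish at a solution: if $\lambda = 0$ then $c_{l,2} = c_{l,3} = 0$ and (\ref{eq:6}) collapses to $-c_{l,1}\e^{-\beta_l} = 0$, which is impossible since $c_{l,1}\neq 0$ by the standing assumption. Hence $c_{l,3} = 2\lambda q_{ll}\neq 0$ — here $q_{ll}>0$ because it is a diagonal entry of the positive-definite matrix $\boldsymbol{\Sigma}^{-1}_{\boldsymbol\beta}$ — so we may divide through by $c_{l,3}$ and rearrange (\ref{eq:6}) to
\begin{equation*}
  \beta_l + \frac{c_{l,2}}{c_{l,3}} = \frac{c_{l,1}}{c_{l,3}}\,\e^{-\beta_l}.
\end{equation*}

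Next I would set $w := \beta_l + c_{l,2}/c_{l,3}$, so that $\beta_l = w - c_{l,2}/c_{l,3}$ and therefore $-\beta_l = -w + c_{l,2}/c_{l,3}$. Substituting into the displayed identity gives $w = \bigl(c_{l,1}/c_{l,3}\bigr)\,\e^{c_{l,2}/c_{l,3}}\,\e^{-w}$, and multiplying both sides by $\e^{w}$ yields
\begin{equation*}
  w\,\e^{w} = \frac{c_{l,1}\,\e^{c_{l,2}/c_{l,3}}}{c_{l,3}}.
\end{equation*}
By the definition of the Lambert $W$-function this is equivalent to $w = W\!\left(c_{l,1}\e^{c_{l,2}/c_{l,3}}/c_{l,3}\right)$; undoing the substitution $w = \beta_l + c_{l,2}/c_{l,3}$ then gives precisely the asserted formula for $\beta_l^\ast$.

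The algebra above is routine, and the only genuinely delicate point — the reason the proposition refers to ``the solutions'' rather than to a unique one — is the multivaluedness of $W$. If $c_{l,3}>0$ the argument $c_{l,1}\e^{c_{l,2}/c_{l,3}}/c_{l,3}$ is strictly positive, the principal branch $W_0$ applies, and the real root is unique. If $c_{l,3}<0$ (which, given the KKT sign of the Lagrange multiplier for a constrained maximum, is typically the relevant case), the argument is negative and one must verify that it lies in $[-1/\e,0)$ and then retain both real branches $W_0$ and $W_{-1}$, each producing a candidate critical point. I would therefore state the result for a general branch of $W$ as in the proposition, and note that the economically correct branch is pinned down afterwards by substituting the candidates back together with the constraint (\ref{eq:7}) and keeping the one that maximises the portfolio variance. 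I expect this branch selection, rather than the inversion itself, to be the only real obstacle.
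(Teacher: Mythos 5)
Your proof is correct and takes essentially the same approach as the paper's: both rest on the substitution $w=\beta_l+c_{l,2}/c_{l,3}$ and the defining relation $w\,\e^{w}=z$ of the Lambert $W$-function, the only difference being that you derive the formula forwards while the paper verifies it by substituting the claimed root back into $-c_{l,1}\,\e^{-\beta_l}+c_{l,2}+c_{l,3}\beta_l=0$. Your added observations --- that $\lambda\neq 0$ (hence $c_{l,3}\neq 0$) at any solution, and that the relevant real branch of $W$ must be pinned down afterwards using the constraint --- address points the paper leaves implicit.
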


\begin{proof}
  For ease of notation, we omit the index $l$, so we show that
  $-c_1\, \e^{-\beta} + c_2 + c_3 \beta=0$, with
  $\beta=\displaystyle W\left(\frac{c_1 \e^{c_2/c_3}} {c_3}\right) -
  \frac{c_2}{c_3}$. Setting
  $w:=\displaystyle W\left(\frac{c_1 \e^{c_2/c_3}} {c_3}\right)$ gives
  \begin{equation*}
    -c_1\, \e^{-(w-c_2/c_3)} + c_2 + c_3 (w-c_2/c_3)= -c_1\,
    \e^{-(w-c_2/c_3)} + c_3 w = 0,
  \end{equation*}
  which can be re-arranged to
  \begin{equation*}
    -\frac{1}{w} \e^{-w}\, c_1 \e^{c_2/c_3} + c_3=0. 
  \end{equation*}
  Using that
  $\displaystyle\frac{1}{w} \e^{-w} = \frac{c_3}{c_1\, \e^{c_2/c_3}}$
  yields the claim.
\end{proof}

\subsubsection{Homogeneous portfolio analysis}
\label{sec:homog-portf-analys}

To better understand the stress testing effect we consider a stylised,
homogeneous portfolio as in Section \ref{sec:stress-test-homog} and
determine the worst stress scenario that lies within a pre-specified
Mahalanobis distance. As before, the $m$ risk factors are binary and
the number of securities is $n=2^m$ comprising all $2^m$ risk factor
combinations. The securities all have equal volatility, and the
portfolio is equally-weighted. The risk factor coefficients
$\boldsymbol\beta$ are also assumed to be homogeneous, i.e., they have
identical means $\overline\beta$, variances $\sigma_\beta^2$ and
correlations $\rho_\beta$.

\begin{proposition}
  In the homogeneous setting, the risk factor coefficients of the
  worst scenario within a given Mahalanobis distance $\sqrt{h}$ are
  constant, i.e., $\beta_1^\ast = \cdots = \beta_m^\ast=\beta^\ast$,
  and given by
  \begin{equation*}
    \beta^\ast=\overline\beta -\sqrt{\frac{h \sigma_\beta^2
        (1+(m-1)\rho_\beta)} {m}}. 
  \end{equation*}
\end{proposition}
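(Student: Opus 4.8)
The plan is to turn the constrained maximisation into a one‑dimensional calculation along the diagonal $\{\boldsymbol\beta=\beta\,\mathbf{1}\}$. By Equation~(\ref{eq:5}), maximising $\text{VaR}_\alpha$ over the Mahalanobis ball $\{D^2(\boldsymbol\beta)\le h\}$ is equivalent to maximising the portfolio variance $\mathbf w^\intercal\boldsymbol\Sigma(\boldsymbol\beta)\mathbf w$ (the quantile factor does not depend on $\boldsymbol\beta$). The iterated summation that proves Equation~(\ref{eq:9}) never uses homogeneity of the $\beta_k$, so for an arbitrary $\boldsymbol\beta=(\beta_1,\dots,\beta_m)^\intercal$ one has
\begin{equation*}
  \mathbf w^\intercal\boldsymbol\Sigma(\boldsymbol\beta)\mathbf w
  =\frac{\sigma^2}{n}\prod_{k=1}^m\left(1+\e^{-\beta_k}\right),
\end{equation*}
so the problem is to maximise $\prod_{k=1}^m(1+\e^{-\beta_k})$ subject to $(\boldsymbol\beta-\overline\beta\,\mathbf{1})^\intercal\boldsymbol\Sigma_{\boldsymbol\beta}^{-1}(\boldsymbol\beta-\overline\beta\,\mathbf{1})\le h$, with $\boldsymbol\Sigma_{\boldsymbol\beta}=\sigma_\beta^2\bigl((1-\rho_\beta)I+\rho_\beta\,\mathbf{1}\mathbf{1}^\intercal\bigr)$.

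First I would record two structural facts. (i)~The objective is convex in $\boldsymbol\beta$: since the second derivative of $x\mapsto\log(1+\e^{-x})$ equals $\e^{x}/(1+\e^{x})^{2}>0$, the map $\boldsymbol\beta\mapsto\sum_k\log(1+\e^{-\beta_k})$ is convex, hence $\prod_k(1+\e^{-\beta_k})$ is log‑convex and in particular convex; a convex function on the compact convex Mahalanobis ball attains its maximum on the bounding ellipsoid, so the constraint~(\ref{eq:7}) is active at the optimum. (ii)~The objective and the feasible ellipsoid are both invariant under permuting the coordinates of $\boldsymbol\beta$, because $\boldsymbol\Sigma_{\boldsymbol\beta}$ is of equicorrelation form; this makes the symmetric ansatz $\beta_1=\dots=\beta_m=\beta^\ast$ natural.

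Second, I would check the symmetric ansatz against the first‑order conditions and then read off $\beta^\ast$ from the active constraint. When $\beta_l=\beta^\ast$ for all $l$, the coefficients $c_{l,1},c_{l,2},c_{l,3}$ in~(\ref{eq:6}) become independent of $l$ (the XOR construction of the $2^m$ securities is symmetric under relabelling the factors, and $\boldsymbol\Sigma_{\boldsymbol\beta}^{-1}$ has a common diagonal entry and a common off‑diagonal entry), so the $m$ stationarity equations~(\ref{eq:6}) coincide and serve only to fix the multiplier $\lambda$ — no Lambert‑$W$ term survives. The active constraint~(\ref{eq:7}) at $\boldsymbol\beta=\beta^\ast\mathbf{1}$ reads $h=(\beta^\ast-\overline\beta)^2\,\mathbf{1}^\intercal\boldsymbol\Sigma_{\boldsymbol\beta}^{-1}\mathbf{1}$. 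Since $\mathbf{1}$ is an eigenvector of $\boldsymbol\Sigma_{\boldsymbol\beta}$ with eigenvalue $\sigma_\beta^2(1+(m-1)\rho_\beta)$, we get $\mathbf{1}^\intercal\boldsymbol\Sigma_{\boldsymbol\beta}^{-1}\mathbf{1}=m/(\sigma_\beta^2(1+(m-1)\rho_\beta))$, hence $(\beta^\ast-\overline\beta)^2=h\sigma_\beta^2(1+(m-1)\rho_\beta)/m$, i.e.\ $\beta^\ast=\overline\beta\pm\sqrt{h\sigma_\beta^2(1+(m-1)\rho_\beta)/m}$. As $\frac{\sigma^2}{n}(1+\e^{-\beta^\ast})^m$ is strictly decreasing in $\beta^\ast$, the maximiser is the negative root, which is the claimed formula.

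The step I expect to be the main obstacle is proving rigorously that a \emph{global} maximiser must have this symmetric form, rather than merely exhibiting one symmetric stationary point among possibly several on the ellipsoid. Two routes look viable. One is to push the stationarity conditions: dividing~(\ref{eq:3}) by the (coordinate‑independent) objective value and writing $\boldsymbol\Sigma_{\boldsymbol\beta}^{-1}=aI-c\,\mathbf{1}\mathbf{1}^\intercal$ with $a>0$, $c\ge0$, one finds that the derivative of $x\mapsto\log(1+\e^{-x})$ at $\beta_l$ equals an affine function of $\beta_l$ whose coefficients are the same for every $l$; so all $\beta_l$ solve one scalar equation, and one argues this equation has a unique solution compatible with a maximum in the regime $\beta^\ast<\overline\beta$. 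The other, more robust route uses the orthogonal decomposition $\boldsymbol\beta=(\overline\beta+t)\mathbf{1}+v$ with $\mathbf{1}^\intercal v=0$: the constraint becomes $m t^2/(\sigma_\beta^2(1+(m-1)\rho_\beta))+\|v\|^2/(\sigma_\beta^2(1-\rho_\beta))\le h$, so for $\rho_\beta\ge0$ the uniform‑shift direction is the cheapest per unit of $\ell^2$‑displacement; since $x\mapsto\log(1+\e^{-x})$ is convex and decreasing, spending the entire budget on a uniform negative shift ($v=0$, $t=-\sqrt{h\sigma_\beta^2(1+(m-1)\rho_\beta)/m}$) dominates any mixture, which gives the claim. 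I would present the convexity‑and‑symmetry argument in the main text and leave the sign check as a one‑line remark.
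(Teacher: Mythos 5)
Your proof reaches the paper's formula through the same skeleton --- symmetric ansatz, active Mahalanobis constraint, $\mathbf 1$ as eigenvector of $\boldsymbol\Sigma_{\boldsymbol\beta}$ with eigenvalue $\sigma_\beta^2(1+(m-1)\rho_\beta)$, and the negative root selected because the variance is decreasing in $\beta$ --- but the way you support the symmetry claim is genuinely different. The paper stays with the first-order conditions (\ref{eq:3}): it computes $q_{11}$ and $q_{12}$ explicitly and argues combinatorially that for each $l$ the indicator $\1_{\{x_i^l\not=x_j^l\}}$ equals one for exactly $n^2/2$ pairs, with the remaining indicators equidistributed over those pairs, so that the $m$ stationarity equations are structurally identical and hence (the paper asserts) share a common solution; it then solves the resulting scalar quadratic constraint. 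You instead exploit that Equation (\ref{eq:9}) holds verbatim for heterogeneous coefficients, $\mathbf w^\intercal\boldsymbol\Sigma\mathbf w=\frac{\sigma^2}{n}\prod_{k}\left(1+\e^{-\beta_k}\right)$, which turns the problem into maximising a separable log-convex function over an equicorrelation ellipsoid; convexity gives activity of the constraint for free, and the decomposition $\boldsymbol\beta=(\overline\beta+t)\mathbf 1+v$ with $\mathbf 1^\intercal v=0$ isolates why the uniform direction is the cheap one when $\rho_\beta\ge 0$. That reformulation is cleaner and makes the paper's counting argument unnecessary. Do note, however, that the step you flag as the obstacle is real and neither argument fully closes it: permutation symmetry of a stationarity system does not by itself force a symmetric solution, and ``spending the whole budget on the uniform shift dominates any mixture'' pits a first-order gain of order $\sqrt{h\sigma_\beta^2 m(1+(m-1)\rho_\beta)}$ against the second-order reward that convexity pays for concentrating the budget. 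A quick check with $m=2$, $\rho_\beta=0$ and the constraint circle parametrised by $\theta\mapsto(\overline\beta-r\cos\theta,\,\overline\beta-r\sin\theta)$ shows that the second derivative of the objective along the circle at the symmetric point is proportional to $r\,g''(\overline\beta-r/\sqrt{2})+\sqrt{2}\,g'(\overline\beta-r/\sqrt{2})$ with $g(x)=\log(1+\e^{-x})$, which becomes positive (so the symmetric point is a local \emph{minimum} on the constraint set) once $r>\sqrt{2}\,(1+\e^{-(\overline\beta-r/\sqrt{2})})$; this can happen for large $\overline\beta$ and $h$. So the proposition, under either proof, implicitly requires $h$ small enough that $\beta^\ast$ stays close to $\overline\beta$ --- a condition satisfied in the paper's calibrated examples. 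Your formulation at least makes this hidden hypothesis visible and checkable, which is a point in its favour.
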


\begin{proof}
  Because of the binary risk factors, the first-order conditions
  (\ref{eq:3}) simplify to
  \begin{equation*}
    \frac{\partial}{\partial \beta_l} \mathcal L = %
    -\frac{\sigma^2}{n^2} \sum_{i,j=1}^n \e^{-\sum_{k=1}^m \beta_k
      \1_{\{x_i^k\not=x_j^k\}}} \cdot \1_{\{x_i^l\not=x_j^l\}}  +
    2\lambda \sum_{k=1}^m (\beta_k-\overline\beta) q_{lk} =0,
    \quad l=1,\ldots, m.%
  \end{equation*}
  where $q_{lk}$ are the entries of
  $\mathbf \Sigma_{\mathbf \beta}^{-1}$ and, because of the
  homogeneity of $\boldsymbol\beta$, $q_{11}=\cdots=q_{mm}$ and
  $q_{lk}$ constant for all $l\not=k$. It is easily verified that
  $q_{11} = \displaystyle \frac{(m-2)\rho_\beta+1}{(1+(m-2)\rho_\beta
    - (m-1)\rho_\beta^2)\sigma_\beta^2}$ and
  $q_{12} = \displaystyle -\frac{\rho_\beta}{(1+(m-2)\rho_\beta -
    (m-1)\rho_\beta^2)\sigma_\beta^2}$.

  For fixed $l$, the number of instances where
  $\1_{\{x_i^l\not=x_j^l\}}=1$, $i,j=1,\ldots,n$, is $n^2/2$; in
  particular, this number is constant regardless of the choice of
  $l$. Whenever $\1_{\{x_i^l\not=x_j^l\}}=1$, then, across all $i,j$,
  the number of terms where $\1_{\{x_i^k\not=x_j^k\}}=1$ holds is
  equally distributed: the $2^{m-1}$ terms, when $k\not=l$, are the
  result of all combinations of $m-1$ zeros and ones. As a
  consequence, the sums in the first-order conditions have the same
  number of terms and would differ only in $\beta_1,\ldots, \beta_m$;
  however, because they all have the same structure it follows that
  $\beta_1^\ast=\cdots=\beta_m^\ast=\beta^\ast$. Hence, the
  first-order conditions reduce to one condition, which is given by
  \begin{equation*}
    \frac{\partial}{\partial \beta} \mathcal L = -\frac{\sigma^2}{n}
    \sum_{k=0}^{m-1} \binom{m-1}{k} \e^{-\beta\cdot (1+k)} +
    2\lambda (\beta-\overline\beta) (q_{11} + (m-1) q_{12}) = 0.
  \end{equation*}
  Because all $\beta$'s are equal, the worst stress scenario at a
  given Mahalanobis distance $\sqrt{h}$ is one of the two solutions of
  the quadratic equation
  \begin{equation*}
    (\boldsymbol{\beta} -
    \boldsymbol{\overline\beta})^\intercal
    \boldsymbol{\Sigma}^{-1}_{\boldsymbol\beta}  
    (\boldsymbol{\beta} - \boldsymbol{\overline\beta}) =
    \Big(\sum_{i=1}^m \sum_{j=1}^m
    (\beta_i-\overline\beta_i)(\beta_j-\overline\beta_j) q_{ij} \Big) = %
    (m q_{11} + m(m-1) q_{12}) (\beta-\overline\beta)^2 = h. 
  \end{equation*}
  Solving for $\beta$ gives
  \begin{equation*}
    \beta = \overline\beta \pm \sqrt{\frac{h}{m q_{11} + m(m-1)
        q_{12}}} = \overline\beta \pm 
    \sqrt{\frac{h \sigma_\beta^2 (1+(m-1)\rho_\beta)} {m}}. 
  \end{equation*}
  The claim follows because the portfolio variance is monotone
  decreasing in $\beta$.
\end{proof}

Obviously, ceteris paribus, the portfolio risk and VaR of the
worst-case scenario increase with the risk factor variance
$\sigma_\beta^2$ and the risk factor correlation $\rho_\beta$. They
decrease with increasing number of risk factors, $m$. However, we will
see in the examples below that if the initial $\overline\beta$ is
fitted from a given constant asset correlation matrix, then the
worst-case scenario may increase with the number of risk factors as
well.

In this setting, we consider a portfolio where the asset returns have
an average correlation of $0.3$. With five $\beta$ risk factor
coefficients, this is achieved by $\beta=0.5204$ (cf.\ Equation
(\ref{eq:10})). We set $\rho_\beta=0.1972$ and $\sigma_\beta=0.1428$
(these values correspond to the historical averages from the ``London
Whale'' case described below). The $95\%$ worst-case scenario is
$\beta=0.2361$. With an annualised asset volatility of $0.25$, the
initial one-day $99\%$-VaR of $2.09\%$ increases by $33\%$ to
$2.79\%$. Figure \ref{fig:maha} shows the worst-case portfolio
variance increase as a function of the number of correlation risk
factors $m$ as well as for several parameter constellations.  The
bottom left graph of Figure~\ref{fig:maha} shows the impact of a joint
correlation and volatility stress scenario on a one-day $99\%$-VaR. As
laid out in Section~\ref{sec:stress-test-corr}, in addition to the
correlation stress scenario, volatility is scaled to a stress level
corresponding to the $\tilde\alpha$-quantile of a Student
$t$-distribution, cf.\ Equation~(\ref{eq:11}), where we have chosen
$\tilde\alpha = \alpha = 0.99$.  The volatility stress alone increases the
unstressed $t$-VaR by up to 51\%, depending on the parameter $\nu$ of
the $t$-distribution. Stressing both correlation and volatility can
add up to 102\% to the unstressed $t$-VaR.

\begin{figure}[t] 
  \centering \includegraphics[scale=.73]{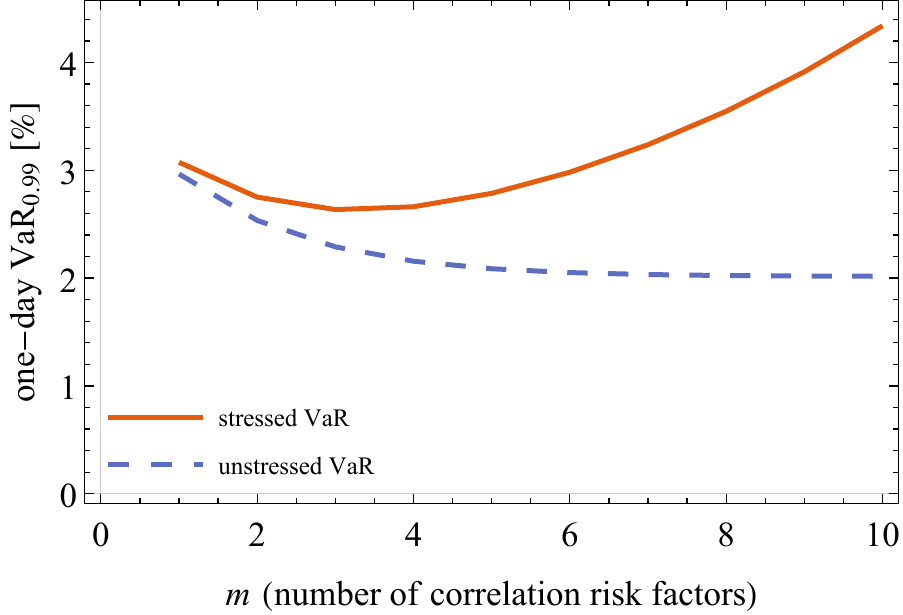}\ \ \ \ \ \
  \includegraphics[scale=.75]{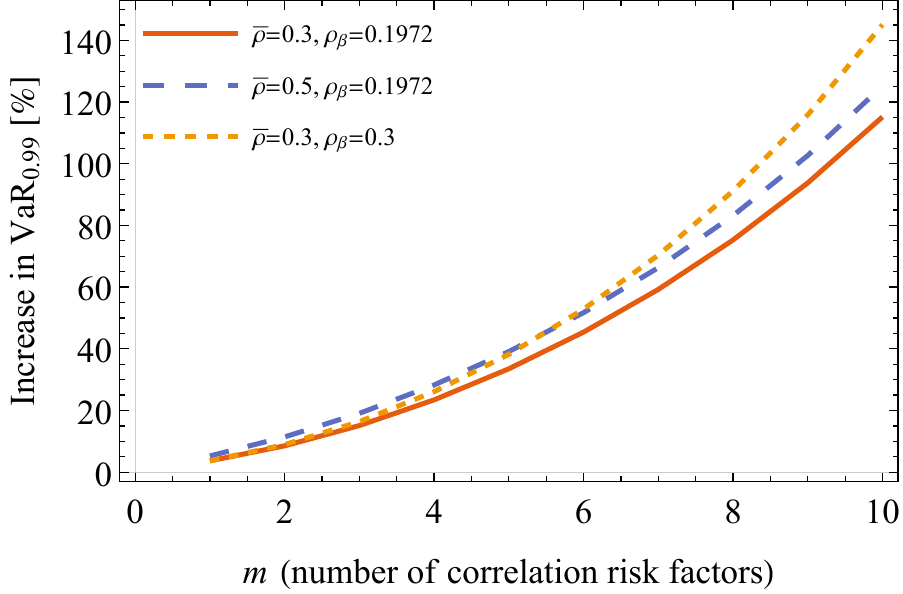}\\\vspace{.25cm}
  
  \includegraphics[scale=.75]{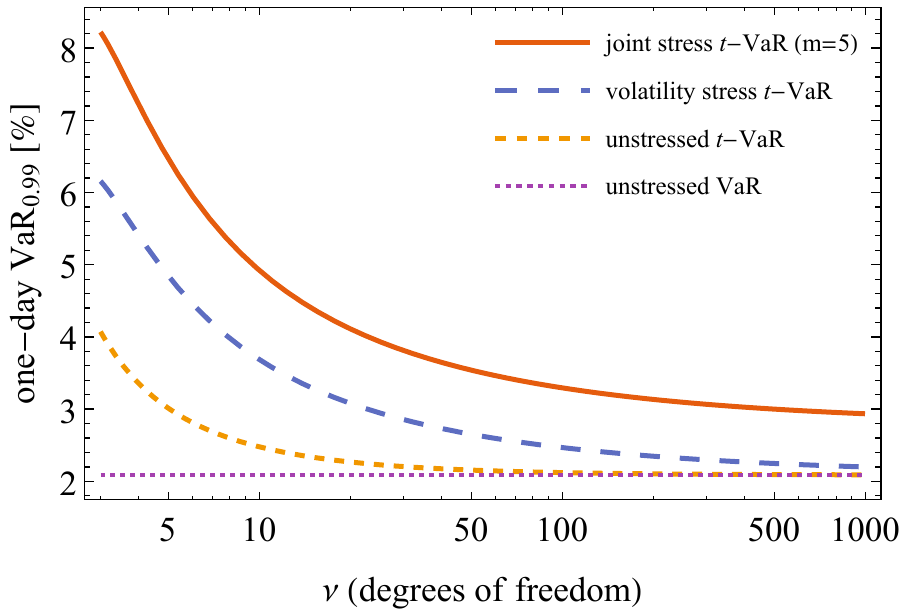}\ \ \ \ \ \
  \includegraphics[scale=.75]{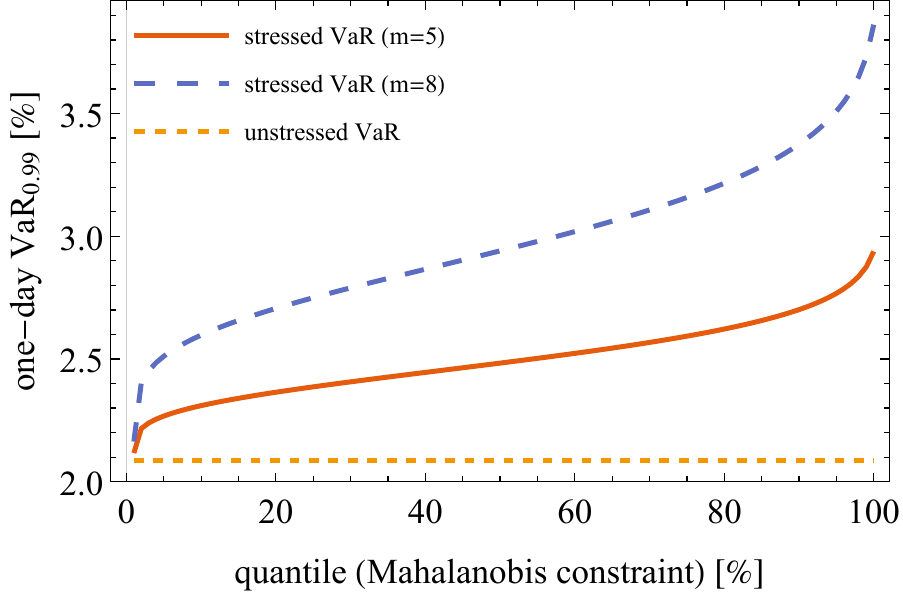}
  \caption{Worst-case (within $95\%$ probability deviation from mean)
    portfolio value-at-risk (VaR). Top left: Initial and stressed
    one-day VaR ($\alpha=0.99$) with average asset correlation
    $\overline\rho=0.3$, annualised asset volatility $\sigma=0.25$,
    beta coefficient correlation $\rho_\beta=0.1972$ and beta standard
    deviation $\sigma_\beta=0.1428$. Top right: Percentage increase in
    VaR for various parameter setups. Bottom left: Joint correlation
    and volatility stress test as a function of $\nu$, with volatility
    stress level $\tilde{\alpha}= \alpha = 0.99$. Bottom right: VaR as a
    function of the Mahalanobis constraint quantile.}
  \label{fig:maha}
\end{figure}

\section{Application to the ``London Whale'' portfolio}
\label{sec:applicationtowhale}

\subsection{The ``London Whale'' case}
In 2012, JPMorgan Chase \& Co.\ reported a loss of approximately
USD~6.2 billion on a credit derivative portfolio that originated on
the books of the comparably small Chief Investment Office (CIO) in
London. This case -- known as the ``London Whale'' -- was generated by
an authorised trading position, so, contrary to most other large
trading losses, it cannot be attributed to fraud or unauthorised
trading.  Interestingly, the loss took place at one of the world's
largest investment banks, widely known for its advanced risk
management, e.g.\ as the the innovator of the widely recognised
RiskMetrics and CreditMetrics frameworks \citep{JPMorgan2013}.

To understand JPMorgan's strategy, trading and risk management of the
loss generating credit portfolio, we consolidate publicly available
information on the London Whale. This section presents our findings in
a very concise format, a detailed review is available at \ifANONYMOUS
[anonymous version: Supplement - Whale Story.pdf].  \else\\
\href{https://ssrn.com/abstract=3210536}{https://ssrn.com/abstract=3210536}.
\fi

JPMorgan, in its function as a lender, is naturally exposed to credit
risk.  In mid-2011, JPM's Chief Investment Office decided to establish
a short credit position via its synthetic credit portfolio (SCP), a
portfolio of credit index derivatives.\footnote{%
  A short position in credit risk corresponds to buying CDS
  protection, i.e., the protection buyer receives default insurance in
  return for a fixed premium. Thus, a deteriorating credit quality
  benefits the protection buyer as a payout becomes more likely and
  hence, the position can be called a short credit risk position.} The
initial purpose of the portfolio was to act as a macro hedge that
would offset naturally long credit exposure
\citep[p. 26]{JPMorgan2013}. A similar strategy and portfolio was
already successfully employed during the 2008--2009 credit crisis. The
decision to re-establish the portfolio was possibly influenced by the
deteriorating credit environment in Europe at that time.

The portfolio was based on the two major global credit derivative
index families, the CDX for the United States and the iTraxx for
Europe.\footnote{The CDX and iTraxx index families are owned, managed,
  compiled and published by Markit Group Limited.
} In addition to the indices, each comprising a portfolio of 125
single-name credit default swaps (CDS), there exists a market of
tranche products, similar to synthetic collateralized debt
obligations, with the indices as underlyings. Both, the CDX and iTraxx
provide different sub-indices, such as an Investment Grade index (IG)
and a High Yield index (HY). At some point, the SCP comprised more
than 120 positions, including most of the active indices and
tranches. For details on the valuation of credit indices and their
tranche products, we refer to Appendix~\ref{sec:tranche-spre-calc} and
\citet{OKane2008}.

The proposed trading strategy was called ``Smart Short''
\citep[p. 51]{USS2013Report}, which translates into a long-short
strategy where credit protection on high yield indices is financed by
selling protection on investment grade indices.  Hence, the upfront
and flow payments can be netted while the resulting portfolio is
sensitive to changes in the market spread between the two position
sides.

By the end of 2011, JPMorgan's senior management assessed an
improvement of the global credit environment, thus requiring less
default protection. Hence the decision was made to reduce the SCP's
risk weighted assets (RWA). The traders in charge estimated that a
direct liquidation of their positions would cost up to USD~590 million
(see internal meeting documents in \citet[Exhibit
8]{USS2013Exhibits}). Faced by this number, the CIO management decided
against a direct reduction, in favour of ``managing'' profit and
losses (P\&L) while gradually reducing RWA over time \citep[pp. 29
ff.]{JPMorgan2013}.

A new trading strategy aimed at reducing RWA by increasing positions
with opposite market sensitivity.  In order to comply with stress
limits, the strategy was implemented by forward spread trades, as was
stated later during an interview with JPMorgan's internal task force
\citep[p. 52]{USS2013Report}. In the context of the SCP, forward
spread trades meant buying protection on short maturity indices, while
selling protection on longer maturities. This would hedge in the near
term but generate credit exposure on the long term.

By the end of January 2012, after experiencing a loss of 50 million
from the default of Kodak, the traders where faced with three
objectives: stemming the year-to-date (YTD) losses on the SCP,
reducing RWA and maintaining protection to prevent default losses
(``Kodak moments''). All objectives were addressed simultaneously by
adding more positions to the portfolio, namely, long risk positions to
participate in the upward moving market, while generating carry to
fund the YTD losses and short risk positions. Additionally, protection
was bought to create positive P\&L from Kodak type events. Therefore,
the traders increased the size of both their long and short positions.

On March 23, 2012, the CIO's most senior executive ordered the traders
to ``put the phones down'', i.e., to cease all related trading
activities \citep[Exhibit 1i]{USS2013Exhibits}. At this point, the SCP
had a net notional of about USD~157 billion \citep[Exhibit
1a]{USS2013Exhibits}, which was 260\% up from the September 2011 net
notional (and slightly more than Vietnam's 2012 GDP). The SCP's top 10
positions as of March 23, 2012 are shown in
Table~\ref{tab:Top10Positions}. Ceasing to trade meant, of course,
that the traders could no longer influence P\&L, and as a consequence
the losses on the SCP sky-rocketed.

\begin{table}
  \centering
  \caption{Top 10 Positions of the SCP as per March 23, 2012, reported
    in USD net notional and as percentage share of the respective
    market. The market's net notional is the net protection bought on
    an index series by net buyers (or equivalently sold by net
    sellers) \citep{DTCC2011}. The publicly available data is
    aggregated (net notional) and may not contain all live positions
    due to possible disclosure restrictions, which explains the
    occurrence of values in excess of 100\%.  }
  \label{tab:Top10Positions}
  \begin{adjustbox}{max width=\textwidth}
    
    \begin{tabular}{lrrrlrr}
      \toprule
      \multicolumn{4}{c}{Index} \\
      \cmidrule(r){1-4}
      Name      & Series & Tenor & Tranche (\%) & Protection & Net Notional (\$) & Share (\%)\\
      \midrule                                                                                   
      CDX.IG    & 9      & 10yr   & Untranched   & Seller      & 72,772,508,000    & 50.19      \\
                & 9      & 7yr    & Untranched   & Seller      & 32,783,985,000    & 22.61      \\
                & 9      & 5yr    & Untranched   & Buyer     & 31,675,380,000    & 21.85      \\
      iTraxx.EU & 9      & 5yr    & Untranched   & Seller      & 23,944,939,583    & 37.01      \\
                & 9      & 10yr   & 22 -- 100     & Seller      & 21,083,785,713    & 22.04      \\
                & 16     & 5yr    & Untranched   & Seller      & 19,220,289,557    & 64.18      \\
      CDX.IG    & 16     & 5yr    & Untranched   & Buyer     & 18,478,750,000    & 78.92      \\
                & 9      & 10yr   & 30 -- 100     & Seller      & 18,132,248,430    & 50.35      \\
                & 15     & 5yr    & Untranched   & Buyer     & 17,520,500,000    & 117.01     \\
      iTraxx.EU & 9      & 10yr   & Untranched   & Seller      & 17,254,807,398    & 26.67      \\
      
      \midrule 
      \multicolumn{1}{l}{Net Total} & \multicolumn{5}{r}{137,517,933,681} \\ 
      \bottomrule
      \multicolumn{7}{l}{ 
      \begin{footnotesize}
        Data source: \citet[Exhibit 36]{USS2013Exhibits} and
        \citet[Section 1, Table 7]{DTCC2014}.
      \end{footnotesize}
      }\\ 
    \end{tabular}
  \end{adjustbox}
\end{table}

Publicly available reports by JPMorgan's internal task force
(\citeyear{JPMorgan2013}) and the United-States-Senate
(\citeyear{USS2013Report}) focus on management and organizational
problems, position misreporting, market manipulation, and
spreadsheet-errors. This neglects that the classical risk measures
employed might have been insufficient in their own right. To monitor
the SCP, JPMorgan (\citeyear{JPMorgan2013}) primarily used the
Value-at-Risk that would be reported in its 10-K
filings. \citet{Cont2015} find that this risk measure was insufficient
due to the size of the SCP, as it scales linearly with position size
and neglects market impact. In addition, the authors state that a
correlation decay, which was observable before the collapse of the
portfolio, was possibly caused by the SCP's own market impact.

Aside from VaR, \emph{credit spread widening of 10\%} (CSW-10) is the
second pivotal risk measure, a sensitivity measure for the profit and
loss impact of a simultaneous 10\% increase in credit
spreads. JPMorgan's traders relied heavily on this measure to balance
their portfolio in a way that offsetting positions would minimize the
overall CSW-10 \citep{JPMorgan2013}.

However, a hedging strategy primarily based on a sensitivity measure,
CSW-10 in this case, ignores that correlation amongst the portfolio
components may be imperfect. Value-at-risk takes into account
correlations, but to the best of our knowledge, potential {\em
  changes\/} in correlation were ignored. As the SCP was a portfolio
composed of a large number of offsetting positions that are highly,
but not perfectly dependent, correlation is easily seen to be a, if
not {\em the}, crucial risk driver. A change in the correlations, for
instance amongst high-yield and investment grade positions, amongst
index and tranche positions, amongst CDX and itraxx positions, could
easily lead to large P\&L swings.

The portfolio hedging alone is a strong indicator for the correlation
dependence of the SCP. Additionally, the size of the SCP and the
resulting market-impact could have affected correlation. It is
therefore possible that on top of the normal variation in correlation,
the portfolio was exposed to more erratic changes that would be
captured only by stress tests.

\subsection{Correlation stress testing the ``London Whale''}
\label{sec:data}

\subsubsection{Correlation methodology}
\label{sec:whale-method}

In the following, the sensitivity to various correlation scenarios of
the SCP position is calculated from historical data. The analysis is
based on the portfolio composition of 23 March 2012, the day when
trading ceased. The historical data are provided by Markit and consist
of daily CDX and itraxx spreads and tranche data (spreads, upfront
payments and base correlations) of the series in place. Details on how
the tranche data was transformed to credit spreads are given in
Appendix~\ref{sec:tranche-spre-calc}.

\begin{figure}[t]
  \centering
  \includegraphics[scale=.435]{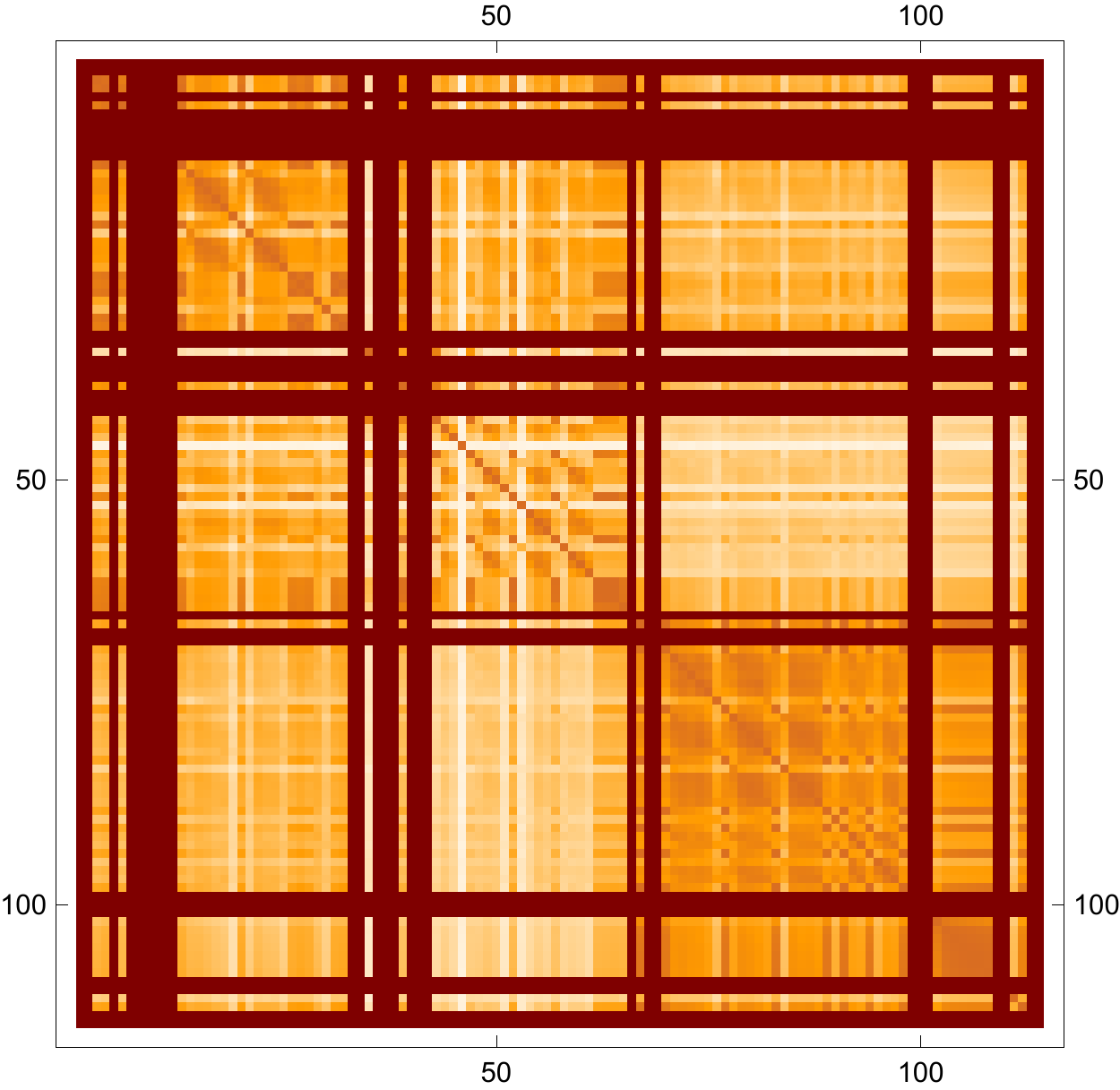}\ \ \ \ \ \ \
  \includegraphics[scale=.6]{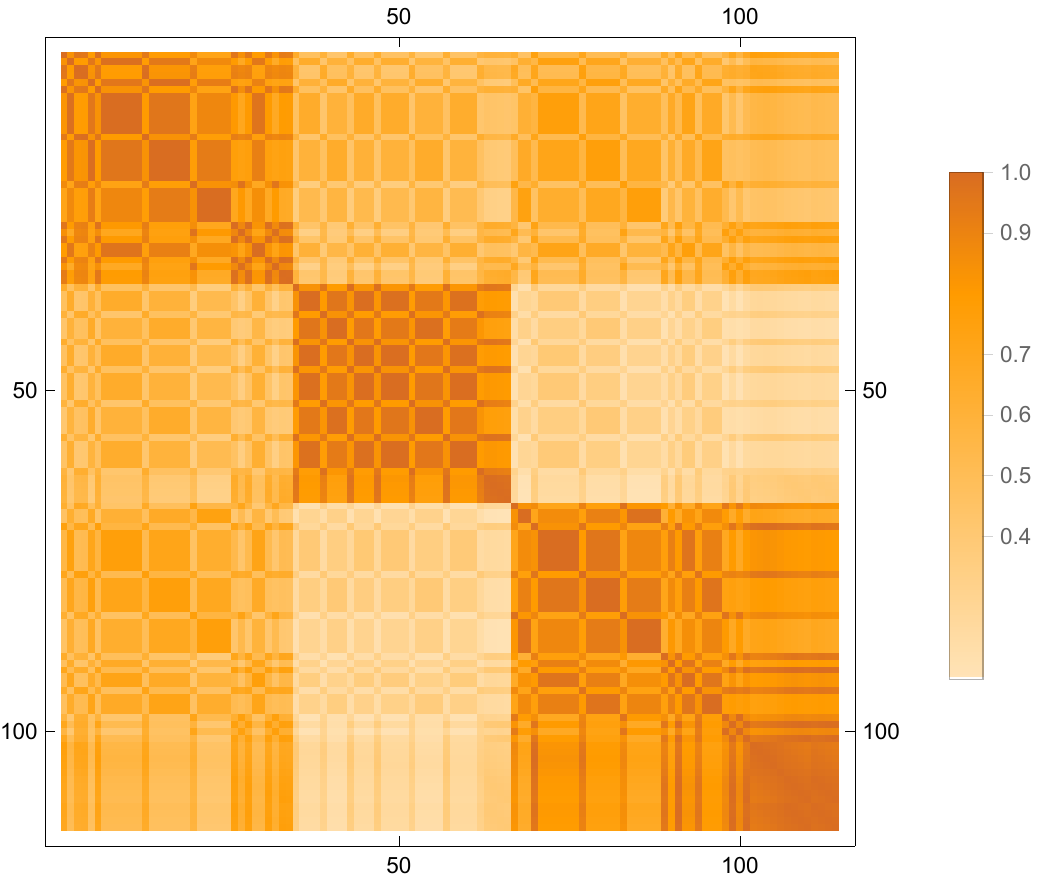}
  \caption{Correlation matrices of 23 March 2012. Left: Empirical
    correlation matrix; right: parameterised correlation matrix. The
    dark red entries are unavailable correlations due to insufficient
    data. The three blocks of highly correlated data consist of (from
    top to bottom): CDX IG, CDX HY and iTraxx securities.  }
  \label{fig:corr_matrices}
\end{figure}

As the main risk factors affecting correlation the following five
properties are identified: maturity, index series, investment grade
(yes/no), CDX vs.\ itraxx, index vs.\ tranche. Information about
seniority of tranches was considered, but failed to provide useful
results. Hence, the correlation $c_{ij}$ of credit spread returns of
credit derivatives indexed by $i$ and $j$ is given by:
\providecommand{\isCDX}{\text{isCDX}}
\providecommand{\isIG}{\text{isIG}}
\providecommand{\maturity}{\text{maturity}}
\providecommand{\series}{\text{series}}
\providecommand{\isIndex}{\text{isIndex}}
\begin{multline}
  c_{ij} = \exp\big(-(\beta_1 |\isCDX_i - \isCDX_j| + \beta_2
  |\isIG_i-\isIG_j| + \beta_3 |\maturity_i-\maturity_j| \big.\\
  \big.+ \beta_4|\series_i - \series_j| + \beta_5|\isIndex_i -
  \isIndex_j|)\big). \label{eq:1}
\end{multline}
In the results provided below, all distance measures are normalised to
$[0,1]$, which makes the impact of the calibrated parameters
comparable.

At any point in time $t$, the parameters $\beta_1, \ldots, \beta_5$
are calibrated from the $250$ credit spread returns preceding day
$t$. Daily parameters are calibrated starting from 1 March 2011
through 12 April 2012. The instruments entering the calibration are
the $117$ instruments identified to be in the SCP portfolio on 23
March 2012. The precise set of instruments entering on each date
differs slightly through time for various reasons such as maturing
instruments, spread availability, etc. Figure \ref{fig:corr_matrices}
shows the empirical correlations and the calibrated correlations from
Equation (\ref{eq:1}) as of 23 March 2012. The calibrated coefficients
$\boldsymbol{\beta} = (0.35, 0.37, 0.21, 0.05, 0.20)^\intercal$
indicate a strong de-correlation amongst the regional 
property (CDX vs.\ itraxx) and the credit quality (investment grade
vs.\ high-yield), a lesser de-correlation amongst maturity and amongst
index vs.\ tranche product. The series-factor on the other hand
provides a strong correlation.

The calibrated parameters for the whole time period (1 March 2011--12
April 2012) are shown in Figure \ref{fig:corr_parameters}. The chart
shows that the credit quality (investment grade versus high yield)
de-correlated over time, while the correlation differences driven by
the region (CDX vs.\ itraxx) decreased. Especially in Q4 2011 and Q1
2012, when strategic decisions regarding the SCP were made, these were
major drivers of correlation changes.

\begin{figure}[t] 
  \centering
  \includegraphics[scale=1]{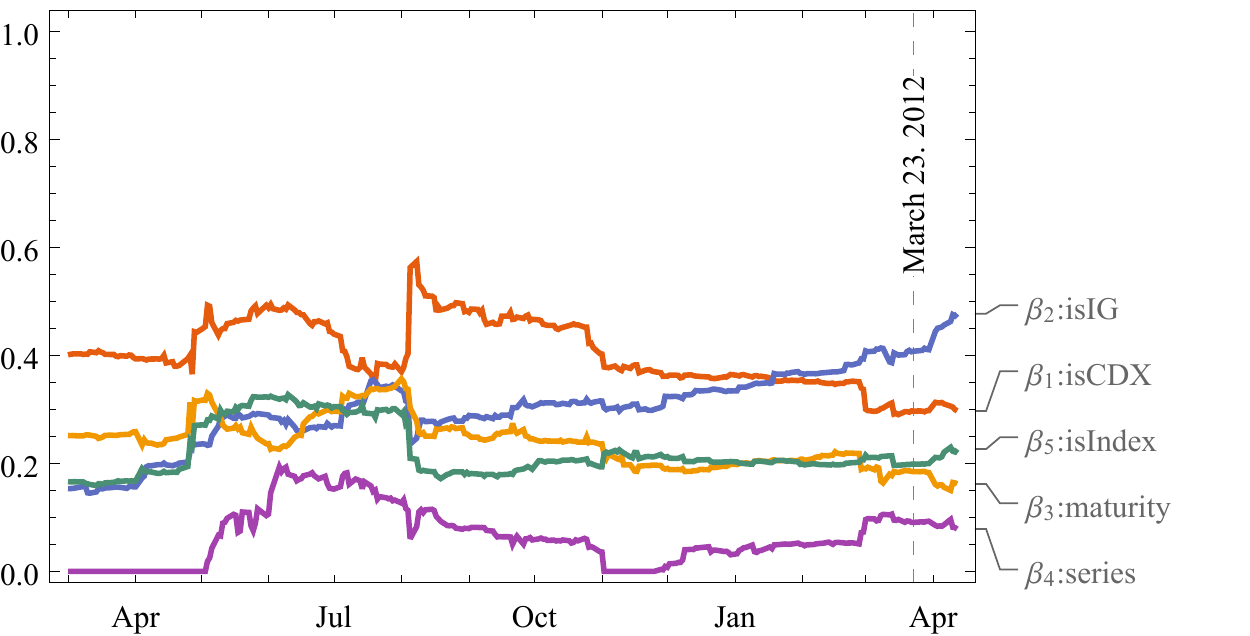} 
  \caption{Coefficients associated with correlation parameterisation
    of CDX and itraxx positions in London Whale position;
    01/03/2011--12/04/2012. All distances are normalised to $[0,1]$ to
    make the coefficients comparable. Data source: Markit.}
  \label{fig:corr_parameters}
\end{figure}

\subsubsection{CDS portfolio risk}
\label{sec:cds-portfolio-risk}

The risk of a CDS portfolio is expressed by value-at-risk (VaR) using
the variance-covariance approach, i.e., the portfolio change is
approximated by a first-order Taylor approximation in the credit
spreads, and credit spread returns are assumed to be normally
distributed. The portfolio risk is then fully captured by the
portfolio variance. To simplify notation, we omit the maturity of a
CDS contract and use the following notation related to CDS position
$i$ in a portfolio of $n$ CDS positions: $s_{i,t}$ denotes the fair
spread at time $t$, $A_i$ is the notional amount of the position and
$\text{RPV01}_{i,t}$ is the risky PV01 at time $t$. Details, such as
the calculation of $\text{RPV01}_{i,t}$, are given in
Appendix~\ref{sec:cds-valuation}.

The portfolio value is then expressed as
$\displaystyle V_t = \sum_{i=1}^n V_{i,t} \approx \sum_i A_i\,
\text{RPV01}_{i,t}\, (s_{i,0}-s_{i,t})$, where $A_i$ is positive for a
short protection position and $A_i$ is negative for a long protection
position.  The portfolio P\&L $\Delta V$ is approximated by spread
returns in the following way:
\begin{align*}
  \Delta V&\approx -\sum_i A_i\, \text{RPV01}_{i,t-1}\, \Delta s_i %
            =- \sum_i A_i\, \text{RPV01}_{i,t-1}\, \frac{\Delta
            s_i}{s_{i,t-1}}\, s_{i,t-1}\\%
          &=\underbrace{- \Big(\sum_j A_j \text{RPV01}_{j,t-1}\,
            s_{j,t-1}\Big)}_{=:V_{t-1}}\cdot
            \sum_i w_i\, r_i,
\end{align*}
where
$w_i=\displaystyle\frac{ A_i\, \text{RPV01}_{i,t-1}\, s_{i,t-1}}
{\sum_j A_j \text{RPV01}_{j,t-1}\, s_{j,t-1}}$ denotes the percentage
weight of the position in the portfolio and
$r_i=\displaystyle\frac{\Delta s_i}{s_{i,t-1}}$ denotes the spread
return. For ease of notation, we write $V_{t-1}$ for the linear
approximation of the portfolio value.

Now, assuming that
$\mathbf r=(r_1,\ldots, r_n)^\intercal \sim
\Ncdf(0,\boldsymbol\Sigma)$, i.e., spread returns are jointly normally
distributed with expectation $0$ (a reasonable assumption for small
time horizons) and covariances described by the $n\times n$ matrix
$\boldsymbol\Sigma$, the portfolio VaR is again given by the
variance-covariance approach, see Equation (\ref{eq:5}). 

\subsubsection{Results}
On March 23, 2012 JPMorgan's senior management ordered to cease all
trading activities for the SCP. The exact portfolio composition is
known only for this day from publicly available sources. To calculate
risk figures for the SCP, the relevant credit index data is taken from
Markit and converted as necessary via the credit valuation model in
Appendix~\ref{sec:cds-valuation}.

The approach uses historical data to fit parameters for both the P\&L
distribution and the correlation model. After processing the data and
excluding constituents with insufficient observations, 93 constituents
with a total net notional of USD~154.34 billion remain to be included
in the calculations. The unstressed delta-normal 1-day VaR at the 99\%
confidence level is USD~339.32 million (base case), which is about
twice as high as the VaR reported by \citet[pp. 124
ff.]{JPMorgan2013}. This number will be used as a benchmark for
scenarios with stressed correlations.

The problem of finding the constrained global maximum of
$\text{VaR}_\alpha(\boldsymbol\beta)$ is solved numerically. To ensure
robustness, first and second order conditions as well as different
algorithms are reviewed, including Nelder Mead, Differential
Evolution, Simulated Annealing and Random Search, all with the same
result. In terms of computational time, Simulated Annealing appears
most efficient.

As laid out in Section~\ref{sec:mahalanobis-distance}, a plausibility
constraint can be applied to the stress testing method. For this
application, where correlation parameters are assumed to follow a
multivariate normal distribution, this means considering only
correlation scenarios that lie on or below a quantile ellipsoid, which
is determined by a Mahalanobis distance. Out of the set of feasible
scenarios, the one with the highest value at risk for a given quantile
is reported in Table~\ref{tab:whaleVaRqs}.

\begin{table}[t]
  \centering
  \caption{The SCP portfolio's 1-day 99\% value-at-risk for different
    Mahalanobis quantile constraints. Percentage changes denote the
    relative distance to the base VaR, i.e., the VaR under the
    original setup of March 23, 2012. The joint stress test captures
    simultaneous changes in correlation and volatility, with
      percentage changes referring to the base $t$-VaR scenario. The
    heaviness of the tails of the return distribution is calibrated to
    $\nu=13.5$. The volatility stress level $\tilde\alpha$ for the
    joint stress test is set to the quantile in column one. }
  \label{tab:whaleVaRqs} 
  \begin{tabular}{lrrrrr}
    \toprule
    & \multicolumn{3}{c}{correlation stress} & \multicolumn{2}{c}{joint stress} \\
    \cmidrule(lr){2-4} \cmidrule(lr){5-6}
    Quantile & VaR$_{0.99}$ & $t$-VaR$_{0.99}$ & Change(\%) & $t$-VaR$_{0.99}$ & Change(\%)\\
    \midrule 
    base case & 339.32 &  354.98 & & 354.98 &\\
    
    0.7   & 366.87 & 383.80 & 8.12  & 386.28 & 8.82 \\
    0.8   & 369.39 & 386.44 & 8.86  & 416.41 & 17.31 \\
    0.9   & 372.89 & 390.10 & 9.89  & 464.40 & 30.83 \\
    0.95  & 375.76 & 393.11 & 10.74 & 510.54 & 43.82 \\
    0.99  & 381.08 & 398.67 & 12.31 & 617.38 & 73.92 \\
    0.995 & 383.00 & 400.68 & 12.87 & 664.73 & 87.26 \\
    0.999 & 386.88 & 404.74 & 14.02 & 780.37 & 119.84 \\

    unconstrained$^*$ & 620.96 & 649.62 & 83.00 & 1252.53 & 252.85\\
    \bottomrule 
    \multicolumn{6}{l}{{\tiny $^*$Unconstrained w.r.t. correlation changes; $\tilde\alpha$ remains on the 0.999 level.}} \\
  \end{tabular}
\end{table}

For the 99\%-quantile constraint the (variance-covariance)-VaR
is USD~381.08 million, which corresponds to a 12.31\% increase
relative to the base case. This is a substantial increase given that a
daily VaR increase at the 99\% level is expected to occur several
times a year. The worst case, which is unconstrained with respect to
the Mahalanobis distance of the parameters, produces a 1-day VaR of
USD~620.96 million, which is 83.01\% greater than the base case. These
results ignore changes from other risk factors, such as volatility,
that would typically increase in a downside scenario as well. In fact,
its size makes the SCP especially vulnerable to other factors, such as
market liquidity issues and resulting plausible correlation scenarios
that are not reflected in historical data.

The results of a joint stress test capturing simultaneous changes in
correlation and volatility are also presented in
Table~\ref{tab:whaleVaRqs}. Here, the volatility stress level
$\tilde\alpha$ is set to the same quantile as the Mahalanobis
  constraint. The parameter $\nu$,
which captures the heaviness of the tails of the underlying return
distribution, is determined by fitting a multivariate $t$-distribution
to the 250 trading days of returns prior to March 23,
2012.\footnote{The parameter $\nu$ is fitted by a combination of
  matching the first two moments and maximum likelihood.}  With
$\nu = 13.5$, the unstressed $t$-VaR is already slightly higher than
the normal VaR, i.e. USD~354.98 million instead of 339.32 million.
The joint stress $t$-VaR is roughly USD 617~million at the 99\%
confidence level, which corresponds to a 73.92\% increase over the
unstressed $t$-VaR.

A correlation stress test would have enabled JPMorgan's risk
management to identify key risk drivers and more appropriately assess
the risks of its portfolio.  Figure~\ref{fig:boxplots_params} shows
box-plots for the correlation parameters as well as the parameters as
of 23 March 2012, and the worst case parameters at a Mahalanobis
constraint equivalent to 99\%. All parameters, with the exception of
$\beta_2$, which identifies whether an index is investment grade or
high-yield, are stressed upwards, hence, decorrelate. The slight
downward shift of $\beta_2$ can be attributed to a parameter increase
prior to the stress test.

Furthermore, \citet{Cont2015} report a breakdown of correlation
between CDX.IG.9 and CDX\allowbreak.IG.10 immediately after trading
was halted in March 2012, when the CIO started to sell its large
positions (see CDX.IG.9 in Table~\ref{tab:Top10Positions}), which ''is
a signature of the market impact of the CIO's trading''. This
structural break could not have been predicted by historical data,
which shows that the actual risk -- owed to the size of the portfolio
-- was closer to the worst case scenario than suggested by the
plausibility constraints. In the unconstrained (worst-)case, the
parameter $\beta_4$, capturing decorrelation between index series,
appears as the major risk driver.

\begin{figure}[t]
  \centering \includegraphics[scale=.7]{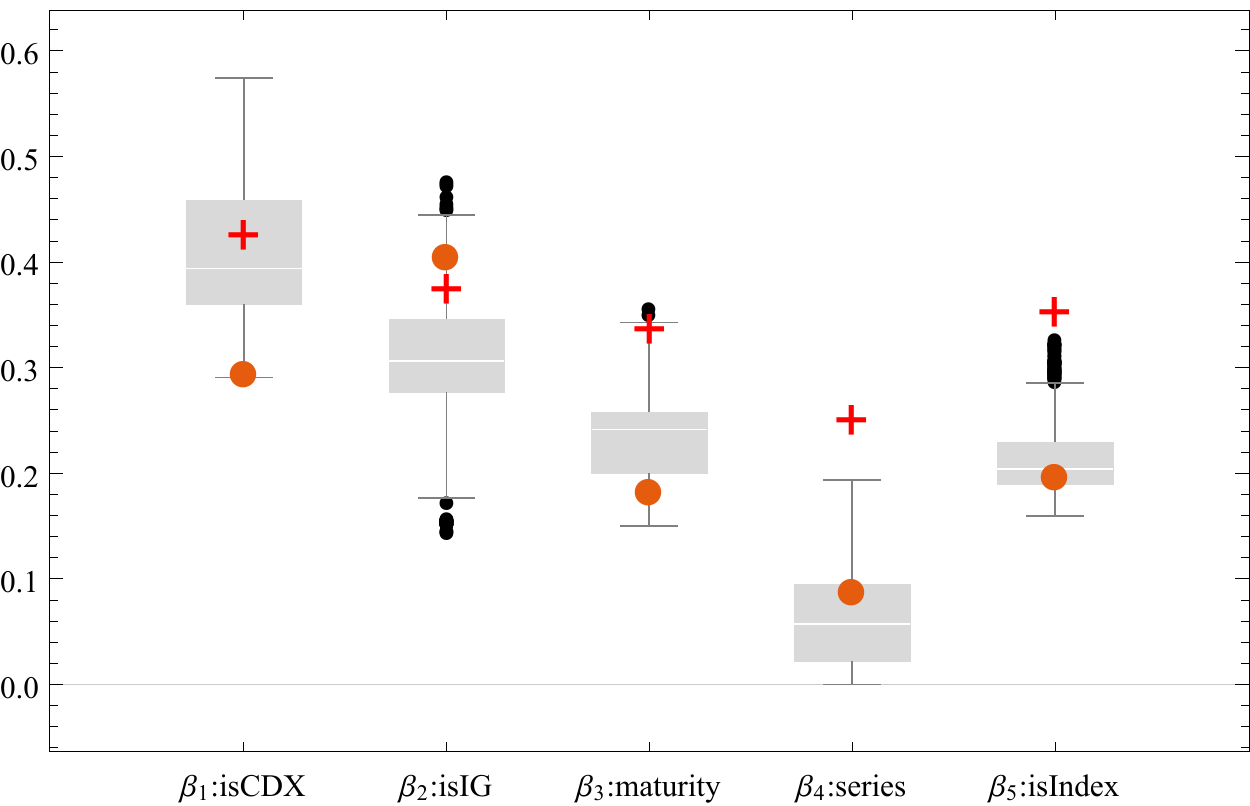}
  \caption{Box-plots of correlation parameters. Dots indicate the
    observed parameters as of 23.03.2012, crosses indicate the
    worst-case scenario under a 99\%-quantile Mahalanobis distance
    constraint.}
  \label{fig:boxplots_params} 
\end{figure}

\section{Concluding remarks and outlook}
\label{sec:conclusion}

The dependence structure amongst portfolio components is of great
relevance to the risk inherent in a financial portfolio, and as such,
correlation stress testing provides important information about
portfolio risk.  The methodology developed in this paper maps risk
factors to correlations, which in turn allows to specify correlation
stress test scenarios in terms of risk-factor changes. In addition,
worst-case risk-factor scenarios can be identified, yielding insights
into the main factors driving portfolio risks and potential losses. We
derive analytical results that allow for computing the value-at-risk
impact both for given scenarios and for determining worst-case
scenarios.

To illustrate the method in a realistic setting, the correlation
stress testing methodology is applied to the case of the ``London
Whale''. This serves as an interesting case, because -- in an attempt
to decrease the portfolio's riskiness -- the notional amount was
increased significantly by adding highly-correlated offsetting
positions. Such a trading strategy is extremely vulnerable to
correlation changes and therefore lends itself to illustrating the
importance of correlation stress testing. Historical data suggests
that, amongst the worst $1\%$ of correlation scenarios, the $1$-day
$99\%$-VaR of the portfolio would have increased by $12\%$ or
more. Such a scenario is expected to occur $2$--$3$ times a year. The
overall worst-case correlation scenario for the portfolio entails a
VaR increase of $83\%$. These are ceteris paribus results, isolating
the effect of a correlation change and neglecting that in reality
large correlation changes would typically occur jointly with
volatility swings. A joint scenario where volatility and correlation
are jointly stressed yields an increase of 73.92\% in the $99\%$
$t$-VaR at a $99\%$-volatility stress leveland a worst-case increase
in VaR of 252.85\%. 

Correlation stress tests are particularly insightful on portfolios
with large positions, as adverse correlation moves may be triggered
from the market impact of large trades. More specifically, the
ordinary co-movement of two or more assets may be disturbed by the
market price impact of large trades. Hence, an appropriate risk
assessment of large portfolios must take into account that a small
risk footprint under conventional risk measures might be due to
offsetting exposures, where only a correlation risk measure can reveal
the true portfolio risk.

A further application of correlation stress testing as developed in
this paper would be the analysis of central counterparties (CCP),
which clear exceptionally large financial portfolios. Currently, 62\%
of the USD~544 trillion in notional outstanding of interest rate
derivatives is cleared through CCPs \citep{Wooldridge2016central}.
As part of their mandate and to protect themselves from the default of
clients, CCPs use a margining system consisting of a {\em variation
  margin\/} (daily mark-to-market settlement) and an {\em initial
  margin\/} (buffer to cover market losses following a client's
default).
To account for diversification benefits and to reduce clients'
clearing costs, margins are typically calculated on a client's netted
position.\footnote{Similar offers are found for almost any financial
  service provider that employs a margining system, e.g.\ brokers and
  future exchanges.} The resulting margin requirements may be highly
correlation-sensitive.  Moreover, adverse correlation scenarios may
affect many or even all clients, creating simultaneous margin calls to
post additional collateral. The correlation stress testing method
developed here is capable of identifying these kinds of systemic risk
events. In addition, correlation stress testing on the clearing client
side would provide insights on possible future margin requirements and
the resulting collateral funding risk.


\appendix

\section{Credit default swap valuation}
\label{sec:cds-valuation}
Given an underlying entity (e.g.\ a sovereign or a company), a {\em
  credit default swap (CDS)} is a contract between two counterparties,
the protection buyer and the protection seller, that insures the
protection buyer against the loss incurred by default of the
underlying entity within a fixed time interval.  The protection buyer
regularly pays a constant premium, the {\em credit spread\/} or {\em
  CDS spread}, which is fixed at inception, up until maturity of the
CDS or the default event, whichever occurs first.  This stream of
payments is termed the {\em premium leg\/} of the CDS.  In return, the
protection seller agrees to compensate the protection buyer for the
loss incurred by default of the underlying entity at the time of
default in case this occurs before maturity.  This constitutes the
{\em protection leg\/} of the CDS. The CDS spread that makes the value
of the premium leg and the protection leg equal is the {\em fair CDS
  spread\/}.

More precisely, let $r>0$ denote the default-free interest rate,
assumed to be constant for simplicity. Furthermore, assume that the
payment at default is a fraction $(1-R)$ of the notational amount,
$R\in [0,1)$.  The probability of default of the underlying entity at
time $t$ until time $T$ is denoted by $P(t,T)$; this probability is
conditional on any information available until time $t$.  Denote by
$s(t,T)$ the fair credit spread at time $t$ of a CDS with maturity
$T$.  Here, we follow the convention that entering into a CDS involves
no initial cash-flow, that is, the market value of a CDS at inception
is $0$. Even though CDS are nowadays traded with an upfront payment,
this still corresponds to the common quoting convention. In other
words, the discounted fair values of the premium and the default legs
are equal. In addition, to simplify the exposition and notation, we
assume that credit spreads are paid continuously instead of quarterly.
From the point of view of a protection seller, the value of a CDS
contract entered at time $t$ is given via risk-neutral pricing by the
value of the premium leg minus the value of the protection
leg. Combining this with the fact that the initial value of the CDS is
$0$ gives:
\begin{equation*}
  0 = s(t,T)\, \underbrace{\int_t^T \e^{-r(u-t)}\, (1-P(t,u))\, \dd
    u}_{=\text{RPV01}} - (1-R)\, \int_t^T \e^{-r(u-t)}\, P(t,\dd u), 
\end{equation*}
or, re-arranging for the spread,
\begin{equation}
  \frac{s(t,T)}{1-R}  = \frac{\int_t^T \e^{-r(u-t)}\, \dd P(t,u)}%
  {\int_t^T \e^{-r(u-t)}\, (1-P(t,u))\, \dd u}. 
  \label{eq:18}
\end{equation}
The term RPV01 denotes the {\em risky present value of a basis
  point}. For a full derivation we refer to e.g.\ Chapter 6 of
\cite{OKane2008}.

The {\em mark-to-market value\/} of an existing CDS position is
expressed as the cost of unwinding the transaction by entering into an
offsetting CDS position. In the following we assume a CDS contract
with maturity $T$ and notional $\$ 1$ entered at time $0\leq t$ from
the point of view of the protection seller.  Conditional on no-default
at time $t$, the value of the position at time $t$ is
\begin{align*}
  V_t &= s(0,T)\, \int_t^T \e^{-r(u-t)}\, (1-P(t,u))\, \dd u - (1-R)\,
        \int_t^T
        \e^{-r(u-t)}\, P(t,\dd u)\\
      &= \left(s(0,T)-s(t,T)\right) \, \int_t^T \e^{-r(u-t)}\,
        (1-P(t,u))\, \dd u %
        = \left(s(0,T)-s(t,T)\right) \, \text{RPV01}(t,T). 
\end{align*}
Here we have used that the values of the premium and protection legs
of the time-$t$ CDS are equal.

A simplification of the valuation occurs by assuming that, similar to
a constant interest rate, the default probabilities are subject to a
constant {\em hazard rate\/} $\lambda_t>0$, i.e.,
$P(t,T) = 1-\e^{-\lambda_t (T-t)}$, $T\geq t$.  The credit spread
formula (\ref{eq:18}) reduces to the so-called {\em credit spread
  triangle},
\begin{equation}
  \label{eq:4}
  \frac{s(t,T)}{1-R} = \lambda_t,
\end{equation}
and the RPV01 is then expressed as
\begin{equation}
  \text{RPV01}(t,T) = \int_t^T \e^{-(r+\lambda_t)(u-t)}\, \dd u %
  = \int_t^T \e^{-(r+s(t,T)/(1-R))(u-t)}\, \dd u. %
  \label{eq:2}
\end{equation}
For value-at-risk calculations, it is useful to approximate the P\&L
$\Delta V$ by a first-order Taylor-approximation on the spread change
$\Delta s=s(t,T)-s(t-1,T)$:\footnote{Assuming that $\Delta t$ is
  small, we ignore the change due to time-decay.}
\begin{align*}
  \Delta V  =V_t-V_{t-1}&\approx \frac{\partial}{\partial s} V_{t-1}\,
                          \cdot \Delta s\\ 
                        &= -\text{RPV01}(t-1,T)\cdot \Delta s + (s(0,T)-s(t-1,T)) \cdot
                          \frac{\partial}{\partial s} \text{RPV01}(t-1,T)\cdot \Delta s. 
\end{align*}
The second term involves a product of spread changes and is therefore
smaller, so we shall ignore it for ease of computations, giving
\begin{equation*}
  \Delta V \approx -\text{RPV01}(t-1,T)\cdot \Delta s.
\end{equation*}

\section{Tranche spread calculation}
\label{sec:tranche-spre-calc}

This appendix gives a brief outline of the calculation of fair spreads
of credit index tranches required for estimating the
$\beta$-parameters of the ``London Whale'' portfolio. The calculations
are based on \cite{OKane2008}. For tranches, the given market data
consists of running spreads, upfront payments and base
correlations. To make all calculations involving both index and
tranche positions consistently use spread time series, the tranche
data are transformed into financially equivalent fair spreads without
upfront payment.

The present value of an index tranche with attachment point $K_1$,
detachment point $K_2$ and maturity $T$ is given by (cf.\
\citealp[Equation (20.1)]{OKane2008}):
\begin{equation*}
  \text{PV}(K_1,K_2)  = U(K_1,K_2) + S(K_1,K_2) \underbrace{\int_0^T
    Z(t) Q(t,K_1,K_2)\, \dd t}_{=\text{RPV01}} - \int_0^T Z(t) (-\dd
  Q(t,K_1,K_2)),  
\end{equation*}
where $U(K_1,K_2)$ is the upfront spread that is paid at inception of
the trade, $S(K_1, K_2)$ is the running spread that is paid regularly
(continuously in the case considered), $Z(t)=\e^{-r t}$ is the
time-$t$ discount factor and where $Q(t,K_1,K_2)$ denotes the {\em
  tranche survival probability}.  The expression
$\int_0^T Z(t) Q(t,K_1,K_2)\, \dd t$ denotes the {\em risky present
  value of a basis point (RPV01)} and
$\int_0^T Z(t)(-\dd Q(t,K_1,K_2))$ corresponds to the value of the
protection leg that pays in case of default. The financially
equivalent spread without an upfront payment satisfies
$s=U(K_1,K_2)/\text{RPV01} + S(K_1,K_2)$, so the calculation reduces
to calculating the {\em tranche survival curve} $Q(t,K_1,K_2)$,
$t\in [0,T]$.

Assuming $m$ contiguous tranches, the survival probabilities
$Q(T,0,K_1), Q(T,K_1,K_2), \ldots,$ $Q(T,K_{m-1},K_m)$ can be
calculated iteratively from the expected tranche losses for each
tranche (cf.\ \citealp[pp.\ 378]{OKane2008}), which in turn are
calculated from the base correlations and the index survival
probability in a one-factor Gaussian latent variable model (cf.\
\citealp[pp.\ 305--307]{OKane2008}).

Finally, the survival probabilities need to be calculated for
$t\in [0,T]$. Market convention holds that the quoted index spread
corresponds to a flat term structure \cite[p.\ 190]{OKane2008}. Making
the same assumption for the tranche spread $s$, the {\em hazard rate}
$\lambda$ entering tranche survival probabilities
$Q_T = \e^{-\lambda T}$ is determined via the so-called {\em credit
  triangle} $\lambda=s/(1-R)$, giving tranche survival probabilities
$Q_t=\e^{-\lambda t}$. The RPV01 is then calculated as
\begin{equation*}
  \int_0^T Z(t) Q(t,K_1,K_2)\, \dd t= \int_0^T \e^{-(r+\lambda)}\, \dd
  t = \frac{1-\e^{-(r+\lambda)T}}{r+\lambda}. 
\end{equation*}

\ifjbf
	\section*{References}
	\bibliographystyle{elsarticle-harv}
	\bibliography{finance,MscBIB}
\else
	\bibliographystyle{elsarticle-harv}
	\bibliography{finance,MscBIB}
\fi

\end{document}